\theoremstyle{bkaexa} 
\theoremstyle{bkaexa} 
\theoremstyle{bkathm} 
\theoremstyle{bkathm} 
\newtheorem{Thm}{Theorem}
\theoremstyle{bkathm} 
\theoremstyle{bkathm} 
\newtheorem{Lem}{Lemma}
\theoremstyle{definition}
\begin{document}
\setstretch{1.5}
\title{On the properties of distance covariance for categorical data: Robustness, sure screening, and approximate null distributions}
\author{\normalsize Qingyang Zhang\\
\normalsize Department of Mathematical Sciences, University of Arkansas, AR 72701\\
\normalsize Email: qz008@uark.edu
}
\date{}
\maketitle

\begin{abstract}
Pearson's Chi-squared test, though widely used for detecting association between categorical variables, exhibits low statistical power in large sparse contingency tables. To address this limitation, two novel permutation tests have been recently developed: the distance covariance permutation test and the U-statistic permutation test. Both leverage the distance covariance functional but employ different estimators. In this work, we explore key statistical properties of the distance covariance for categorical variables. Firstly, we show that unlike Chi-squared, the distance covariance functional is B-robust for any number of categories (fixed or diverging). Second, we establish the strong consistency of distance covariance screening under mild conditions, and simulations confirm its advantage over Chi-squared screening, especially for large sparse tables. Finally, we derive an approximate null distribution for a bias-corrected distance correlation estimate, demonstrating its effectiveness through simulations.
\end{abstract}

\noindent\textbf{Keywords}: large sparse contingency tables; distance covariance; B-robustness; screening consistency

\section{Introduction}
Detecting associations between categorical variables is essential in various scientific fields. In genetics, for example, researchers often investigate the linkage between genotype or haplotype and phenotypic traits. Similarly, social scientists frequently examine the relationships between various sociodemographic factors, such as socioeconomic status, educational attainment, marital status, and political affiliation. Pearson's Chi-squared test has been a widely used tool for this purpose. However, its statistical power may deteriorate when applied to relatively large and sparse contingency tables, as the underlying assumptions of the Chi-squared test can be severely violated. To address this limitation, two novel permutation tests, including the distance covariance permutation test \cite{Zhang2019} and U-statistic permutation test \cite{USP, USPAOS}, have been recently developed. These two tests demonstrate greater power than Pearson's Chi-squared test, particularly when applied to  large contingency tables with limited sample sizes (e.g., see Figure 2 in \cite{Zhang2019} and Figure 3 in \cite{USP}). Notably, both tests leverage an $L_{2}$-type functional (equivalent to the distance covariance functional) but employ different estimators. We begin with a brief review of Pearson's Chi-squared test and the two permutation tests.

Let $X$ and $Y$ be two categorical variables, with $X\in\{1,~ ...,~ I\}$ and $Y\in\{1,~ ...,~ J\}$. Let $\pi = \{\pi_{ij}\}_{1\leq i\leq I, 1\leq j\leq J}$ be the joint distribution of $(X, Y)$, where $\pi_{ij} = P(X = i, Y = j)$, and $\{\pi_{i+}\}_{1\leq i\leq I}$ and $\{\pi_{+j}\}_{1\leq j\leq J}$ be the marginal probabilities. Let $n_{ij}$ be the observed count in cell $(i, j)$, $n = \sum_{i=1}^{I}\sum_{j=1}^{J}n_{ij}$ be the total sample size, $n_{i+} = \sum_{j=1}^{J}n_{ij}$ and $n_{+j} = \sum_{i=1}^{I}n_{ij}$ be the marginal counts. The maximum likelihood estimates are $\widehat{\pi}_{ij} = n_{ij}/n$, $\widehat{\pi}_{i+} = n_{i+}/n$, and $\widehat{\pi}_{+j} = n_{+j}/n$. The hypothesis test for independence between $X$ and $Y$ can be formulated as 
\begin{align*}
H_{0}: \pi_{ij} &= \pi_{i+}\pi_{+j}\text{~for~}1\leq i\leq I,~ 1\leq j\leq J, \\
H_{a}: \pi_{ij} &\neq \pi_{i+}\pi_{+j}\text{~for~some~}(i, j),
\end{align*}
and Pearson's Chi-squared statistic can be expressed as
\begin{equation*}
\widehat{\eta} = \sum_{i=1}^{I}\sum_{j=1}^{J}\frac{(\widehat{\pi}_{ij} - \widehat{\pi}_{i+}\widehat{\pi}_{+j})^2}{\widehat{\pi}_{i+}\widehat{\pi}_{+j}},
\end{equation*} 
which is based on the following functional 
\begin{equation*}
\eta = \sum_{i=1}^{I}\sum_{j=1}^{J}\frac{(\pi_{ij} - \pi_{i+}\pi_{+j})^2}{\pi_{i+}\pi_{+j}}.
\end{equation*} 
A closely related quantity is the likelihood ratio Chi-squared statistics, defined as
\begin{equation*}
\widehat{g} = 2\sum_{i=1}^{I}\sum_{j=1}^{J} \widehat{\pi}_{ij} \log\frac{\widehat{\pi}_{ij}}{\widehat{\pi}_{i+}\widehat{\pi}_{+j}}.
\end{equation*} 
It is well known that $\widehat{\eta}$ is the second-order Taylor expansion of $\widehat{g}$, thus in general $\widehat{\eta}\approx\widehat{g}$. Under independence and the normal condition $\min_{i, j}n\widehat{\pi}_{i+}\widehat{\pi}_{+j} \geq 5$, both $n\widehat{\eta}$ and $n\widehat{g}$ are close to a Chi-squared distribution with $df = (I-1)(J-1)$. Empirically, the approximation is still satisfactory even when up to 20\% of cells violate the normal condition. However, in large sparse tables, both tests might miss the true associations due to limited sample sizes. 

The two aforementioned permutation tests are both based on the following $L_2$-type functional 
\begin{equation*}
\Delta = \sum_{i=1}^{I}\sum_{j=1}^{J}(\pi_{ij} - \pi_{i+}\pi_{+j})^2,
\end{equation*} 
which is equivalent to the squared distance covariance between $X$ and $Y$ \cite{Zhang2019, DC}. Zhang (2019) considered the following maximum likelihood estimate for permutation test
\begin{equation*}
\widehat{\Delta} = \sum_{i=1}^{I}\sum_{j=1}^{J}\left(\widehat{\pi}_{ij} - \widehat{\pi}_{i+}\widehat{\pi}_{+j} \right)^2,
\end{equation*} 
while Berrett \& Samworth (2021) used an unbiased estimate derived from the fourth-order U-statistic. To illustrate, let $\{(X_{1},~ Y_{1}),~ ...,~ (X_{n},~ Y_{n})\}$ be i.i.d. samples. The unbiased estimate of squared distance covariance can be written as
\begin{equation*}
\widetilde{\Delta} = \frac{T_{1}}{n(n-3)} - \frac{2T_{2}}{n(n-2)(n-3)} + \frac{T_{3}}{n(n-1)(n-2)(n-3)},
\end{equation*}
where
\begin{align*}
T_{1} & = \sum_{l=1}^{n}\sum_{m=1}^{n} \| X_{l} - X_{m} \|\cdot \| Y_{l} - Y_{m} \|, \\
T_{2} & = \sum_{l=1}^{n}\left( \sum_{m=1}^{n}  \| X_{l} - X_{m} \| \sum_{m=1}^{n}  \| Y_{l} - Y_{m} \| \right), \\
T_{3} & = \left(\sum_{l=1}^{n}\sum_{m=1}^{n} \| X_{l} - X_{m} \|\right) \left(\sum_{l=1}^{n}\sum_{m=1}^{n} \| Y_{l} - Y_{m} \|\right). 
\end{align*}
 
For categorical variables $X$ and $Y$, let $\| X_{l} - X_{m} \| = 0$ if $X_{l} = X_{m}$ and 1 otherwise, $\widetilde{\Delta}$ can be expressed as
\begin{align*}
\widetilde{\Delta} & = \frac{n}{n-3} \sum_{i = 1}^{I} \sum_{j=1}^{J} (\widehat{\pi}_{ij} - \widehat{\pi}_{i+}\widehat{\pi}_{+j})^{2} - \frac{4n}{(n-2)(n-3)}\sum_{i = 1}^{I} \sum_{j=1}^{J}\widehat{\pi}_{ij}\widehat{\pi}_{i+}\widehat{\pi}_{+j}  \\
& + \frac{n}{(n-1)(n-3)}\left( \sum_{i = 1}^{I}\widehat{\pi}_{i+}^{2} + \sum_{j=1}^{J}\widehat{\pi}_{+j}^{2} \right) + \frac{n(3n-2)}{(n-1)(n-2)(n-3)}\left(\sum_{i = 1}^{I}\widehat{\pi}_{i+}^{2}\right)\left(\sum_{j=1}^{J}\widehat{\pi}_{+j}^{2}\right)  \\
& - \frac{n}{(n-1)(n-3)},
\end{align*}
and it can be shown that the maximum likelihood estimate ($\widehat{\Delta}$) and the unbiased estimate ($\widetilde{\Delta}$) are asymptotically equivalent, as $\widetilde{\Delta} - \widehat{\Delta}  = O(1/n)$ (details are given in the proof of Theorem 3). 

While simulations confirm the sensitivity of both tests to dependence departures, the properties of the distance covariance functional itself remains unexplored. This work delves into the statistical properties of $\Delta$ and its estimates. We establish the B-robustness of $\Delta$, the strong screening consistency of $\widetilde{\Delta}$ and $\widehat{\Delta}$, and derive an approximate null distribution for the normalized $\widetilde{\Delta}$. These nice properties suggest that this novel functional holds great potential for wider application, particularly in high-dimension data analysis.

The remainder of this paper is structured as follows: In Section 2, we derive the influence functions of $\eta$ and $\Delta$, and analyze the B-robustness of the two functionals. In Section 3, we focus on the problem of feature screening, and establish the strong screening consistency of both $\widehat{\Delta}$ and $\widetilde{\Delta}$. Section 4 presents an approximate null distribution for the normalized $\widetilde{\Delta}$ using a weighted sum of Chi-squared distributions, enabling analytical p-value calculation. Section 5 discusses the paper with some future perspectives.

\section{Robustness under fixed and diverging dimensions} 
In this section, we study the robustness of Pearson's Chi-squared and distance covariance functionals. With the aid of influence function \cite{Hampel74, Hampel86}, we show that the distance covariance is robust for any two categorical variables, while the Chi-squared is generally not. For two random variables with joint distribution $\pi$, the influence function ($IF$) of a statistical functional $R(\pi)$ at $(X = x, Y = y)$ is defined as 
\begin{equation*}
\mbox{IF}[(x,~ y),~ R,~ \pi)] = \lim_{\epsilon\downarrow 0}\frac{R[(1-\epsilon)\pi + \epsilon\delta_{(x,~ y)}] - R(\pi)}{\epsilon},
\end{equation*}
where $\delta_{(x,~ y)}$ represents a Dirac distribution which puts all its mass on $(x,~ y)$. The influence function above is essentially a G\^{a}teaux derivative at $(x,~ y)$, which quantifies the sensitivity of $R$ to infinitesimal modifications in a single data point $(x,~ y)$. A substantial influence function indicates that the functional is highly susceptible, whereas a modest influence function suggests that the functional is more robust to $(x,~ y)$. The gross error sensitivity can be used to measure the maximum change to $R$ that a small perturbation to $\pi$ at a point can induce, expressed as 
\begin{equation*}
\gamma(R,~ \pi) = \sup_{x,~ y}\left|\mbox{IF}[(x,~ y),~ R,~ \pi)]\right|,
\end{equation*}
and $R$ is said to be B-robust at $\pi$ if $\gamma(R,~ \pi)$ is bounded. 

Lemma 1 below gives the influence function of $\Delta$ (detailed proof is provided in A.1).
\begin{Lem}
Suppose $X$ and $Y$ are categorical variables with $I$ and $J$ categories and joint distribution $\pi$, the squared distance covariance $\Delta(\pi)$ has the following influence function
\begin{align*}
\mbox{IF}[(x,~ y),~ \Delta,~ \pi)] = & 2\sum_{i=1}^{I}\sum_{j=1}^{J} (\pi_{ij} - \pi_{i+}\pi_{+j})(2\pi_{i+}\pi_{+j} - \pi_{ij}) \\
& - 2\sum_{j=1}^{J} \pi_{+j}(\pi_{xj} - \pi_{x+}\pi_{+j}) - 2\sum_{i=1}^{I} \pi_{i+}(\pi_{iy} - \pi_{i+}\pi_{+y}) \\
& + 2(\pi_{xy} - \pi_{x+}\pi_{+y}).
\end{align*}
\end{Lem}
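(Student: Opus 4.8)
The plan is to evaluate the limit in the definition of the influence function directly as a one-sided derivative at $\epsilon=0$. Write $\pi_\epsilon=(1-\epsilon)\pi+\epsilon\delta_{(x,y)}$ for the contaminated distribution. Its cell probabilities are $\pi^\epsilon_{ij}=(1-\epsilon)\pi_{ij}+\epsilon\,\mathbf{1}\{i=x,\,j=y\}$, and summing over the $i$th row and the $j$th column gives the contaminated marginals $\pi^\epsilon_{i+}=(1-\epsilon)\pi_{i+}+\epsilon\,\mathbf{1}\{i=x\}$ and $\pi^\epsilon_{+j}=(1-\epsilon)\pi_{+j}+\epsilon\,\mathbf{1}\{j=y\}$. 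Since each of these is affine in $\epsilon$, the map $\epsilon\mapsto\Delta(\pi_\epsilon)=\sum_{i,j}(\pi^\epsilon_{ij}-\pi^\epsilon_{i+}\pi^\epsilon_{+j})^2$ is a polynomial in $\epsilon$; hence the limit exists and $\mbox{IF}[(x,y),\Delta,\pi]=\frac{d}{d\epsilon}\Delta(\pi_\epsilon)\big|_{\epsilon=0}$.

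Next I would introduce $d_{ij}(\epsilon)=\pi^\epsilon_{ij}-\pi^\epsilon_{i+}\pi^\epsilon_{+j}$, so that $\Delta(\pi_\epsilon)=\sum_{i,j}d_{ij}(\epsilon)^2$ and $\frac{d}{d\epsilon}\Delta(\pi_\epsilon)=2\sum_{i,j}d_{ij}(\epsilon)\,d_{ij}'(\epsilon)$. At $\epsilon=0$ we have $d_{ij}(0)=\pi_{ij}-\pi_{i+}\pi_{+j}$, while the elementary derivatives $(\pi^\epsilon_{ij})'|_0=\mathbf{1}\{i=x,\,j=y\}-\pi_{ij}$, $(\pi^\epsilon_{i+})'|_0=\mathbf{1}\{i=x\}-\pi_{i+}$, $(\pi^\epsilon_{+j})'|_0=\mathbf{1}\{j=y\}-\pi_{+j}$ together with the product rule give
\[
d_{ij}'(0)=\mathbf{1}\{i=x,\,j=y\}-\pi_{ij}+2\pi_{i+}\pi_{+j}-\mathbf{1}\{i=x\}\,\pi_{+j}-\mathbf{1}\{j=y\}\,\pi_{i+}.
\]
Substituting this into $2\sum_{i,j}d_{ij}(0)\,d_{ij}'(0)$ and splitting the sum along the five summands of $d_{ij}'(0)$ completes the computation: the pair $-\pi_{ij}+2\pi_{i+}\pi_{+j}$ gives the first line of the claimed formula, $2\sum_{i,j}(\pi_{ij}-\pi_{i+}\pi_{+j})(2\pi_{i+}\pi_{+j}-\pi_{ij})$; the indicator $\mathbf{1}\{i=x,\,j=y\}$ collapses the double sum to $2(\pi_{xy}-\pi_{x+}\pi_{+y})$; the factor $-\mathbf{1}\{i=x\}\,\pi_{+j}$ removes the sum over $i$ and yields $-2\sum_j\pi_{+j}(\pi_{xj}-\pi_{x+}\pi_{+j})$; and symmetrically $-\mathbf{1}\{j=y\}\,\pi_{i+}$ yields $-2\sum_i\pi_{i+}(\pi_{iy}-\pi_{i+}\pi_{+y})$.

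There is no genuine analytic difficulty here, since the argument amounts to differentiating a polynomial in $\epsilon$; the only point requiring care is the bookkeeping when the double sums are collapsed — keeping each indicator attached to the correct index, and not conflating the contributions of $-\mathbf{1}\{i=x\}\,\pi_{+j}$ and $-\mathbf{1}\{j=y\}\,\pi_{i+}$ at the cell $(x,y)$, which is automatic because they arise from distinct summands of $d_{ij}'(0)$. As a consistency check one can verify directly, term by term, that $\sum_{x,y}\pi_{xy}\,\mbox{IF}[(x,y),\Delta,\pi]=0$, as every influence function must satisfy.
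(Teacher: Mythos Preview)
Your proof is correct and follows essentially the same approach as the paper: both compute the G\^{a}teaux derivative of $\Delta(\pi_\epsilon)$ at $\epsilon=0$. The only difference is organizational --- the paper partitions the cells into the four cases $(i\neq x,\,j\neq y)$, $(i=x,\,j\neq y)$, $(i\neq x,\,j=y)$, $(i=x,\,j=y)$, differentiates each $\Delta_{ij\epsilon}$ separately, and then reassembles; you instead carry the indicators $\mathbf{1}\{i=x\}$, $\mathbf{1}\{j=y\}$ through a single chain-rule computation and split the resulting double sum along the five summands of $d_{ij}'(0)$, which is a bit slicker and makes the four terms of the claimed formula drop out without any recombination step.
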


Theorem 1 below suggests that $\Delta$ is B-robust for any two categorical variables, with fixed or diverging number of categories.
\begin{Thm}
For any fixed or diverging $(I,~ J)$ and any $\pi$, the squared distance covariance $\Delta(\pi)$ is B-robust with gross error sensitivity less than 11.
\end{Thm}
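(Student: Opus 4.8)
The plan is to bound $\bigl|\mathrm{IF}[(x,y),\Delta,\pi]\bigr|$ uniformly in $(x,y)$, in $\pi$, and in $(I,J)$, working directly from the formula in Lemma~1. Write that influence function as $2\bigl(R_0-R_1-R_2+R_3\bigr)$, where $R_0=\sum_{i,j}(\pi_{ij}-\pi_{i+}\pi_{+j})(2\pi_{i+}\pi_{+j}-\pi_{ij})$ is the part not depending on $(x,y)$, while $R_1=\sum_j\pi_{+j}(\pi_{xj}-\pi_{x+}\pi_{+j})$, $R_2=\sum_i\pi_{i+}(\pi_{iy}-\pi_{i+}\pi_{+y})$, and $R_3=\pi_{xy}-\pi_{x+}\pi_{+y}$ are the point-dependent pieces. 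A dimension-free bound is available because each of these sums is a sum of \emph{products} of probabilities: after extracting one factor that is at most $1$, what remains telescopes to at most $1$ through $\sum_i\pi_{i+}=\sum_j\pi_{+j}=\sum_{i,j}\pi_{ij}=1$. So I would never estimate cell by cell, which would cost powers of $I$ or $J$, but always push each summation onto one of those identities.

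For the three point-dependent pieces this is immediate. Since $0\le\pi_{+j}\le1$ and $\sum_j\pi_{xj}=\pi_{x+}$, and since $\sum_j\pi_{+j}^2\le\sum_j\pi_{+j}=1$, both $\sum_j\pi_{+j}\pi_{xj}$ and $\pi_{x+}\sum_j\pi_{+j}^2$ lie in $[0,\pi_{x+}]$, so $|R_1|\le\pi_{x+}\le1$; symmetrically $|R_2|\le\pi_{+y}\le1$. For $R_3$, both $\pi_{xy}$ and $\pi_{x+}\pi_{+y}$ lie in $[0,\min(\pi_{x+},\pi_{+y})]$, and $|a-b|\le\max(a,b)$ for $a,b\ge0$ gives $|R_3|\le\min(\pi_{x+},\pi_{+y})\le1$. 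Hence the point-dependent contribution to $\bigl|\mathrm{IF}\bigr|$ is at most $2(|R_1|+|R_2|+|R_3|)\le6$, uniformly.

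The substantive step is an absolute bound on $|R_0|$. Put $v_{ij}=\pi_{ij}-\pi_{i+}\pi_{+j}$ and use $2\pi_{i+}\pi_{+j}-\pi_{ij}=\pi_{i+}\pi_{+j}-v_{ij}$ to rewrite $R_0=\sum_{i,j}v_{ij}\pi_{i+}\pi_{+j}-\Delta(\pi)$. By Cauchy--Schwarz, $\bigl|\sum_{i,j}v_{ij}\pi_{i+}\pi_{+j}\bigr|\le\sqrt{\Delta(\pi)}\,\bigl((\sum_i\pi_{i+}^2)(\sum_j\pi_{+j}^2)\bigr)^{1/2}\le\sqrt{\Delta(\pi)}$, so $|R_0|\le\sqrt{\Delta(\pi)}+\Delta(\pi)$. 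It remains to bound $\Delta(\pi)$ by a constant. I would invoke $0\le\mathrm{dCor}(X,Y)\le1$, equivalently $\Delta(\pi)=\mathrm{dCov}^2(X,Y)\le\mathrm{dVar}(X)\,\mathrm{dVar}(Y)$, together with the explicit value $\mathrm{dVar}^2(X)=r_2+r_2^2-2r_3$, where $r_k=\sum_i\pi_{i+}^k$; since $r_3\ge r_2^2$ by Cauchy--Schwarz (using $r_1=1$), this gives $\mathrm{dVar}^2(X)\le r_2-r_2^2\le\tfrac14$, and likewise $\mathrm{dVar}^2(Y)\le\tfrac14$, whence $\Delta(\pi)\le\sqrt{\mathrm{dVar}^2(X)\,\mathrm{dVar}^2(Y)}\le\tfrac14$. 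Therefore $|R_0|\le\tfrac12+\tfrac14=\tfrac34$, and $\bigl|\mathrm{IF}[(x,y),\Delta,\pi]\bigr|\le2\bigl(\tfrac34+1+1+1\bigr)=\tfrac{15}{2}<11$ for every $(x,y)$, $\pi$, $I$, and $J$. Taking the supremum over $(x,y)$ yields $\gamma(\Delta,\pi)<11$, i.e.\ B-robustness; and since nothing depended on $I$ or $J$, this holds whether the number of categories is fixed or diverging.

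The main obstacle is precisely making the estimate of $R_0$ (equivalently, of $\Delta$) free of $I$ and $J$. The tempting bound $|\pi_{ij}-\pi_{i+}\pi_{+j}|\le\sqrt{\pi_{i+}\pi_{+j}}$ summed over cells produces $(\sum_i\sqrt{\pi_{i+}})(\sum_j\sqrt{\pi_{+j}})$, which can grow like $\sqrt{IJ}$; one must instead keep the quadratic form intact so that Cauchy--Schwarz yields the harmless factor $(\sum_i\pi_{i+}^2)(\sum_j\pi_{+j}^2)\le1$. Beyond that, everything is routine bookkeeping with $\sum_i\pi_{i+}=\sum_j\pi_{+j}=1$, $\pi_{ij}\le\min(\pi_{i+},\pi_{+j})$, and $|a-b|\le\max(a,b)$ for nonnegatives, none of which introduces any dependence on the number of categories, which is the whole point of the theorem. (Any cruder bound of the form $\Delta(\pi)\le1$ would also close the argument, still giving $\gamma(\Delta,\pi)<11$.)
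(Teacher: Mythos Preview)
Your proof is correct and follows the same overall plan as the paper: split the influence function from Lemma~1 into the same four pieces and bound each by a dimension-free constant. Your bounds on the three point-dependent pieces $R_1,R_2,R_3$ are essentially the paper's bounds on $T_2,T_3,T_4$ (the paper also uses $\sum_j\pi_{+j}\pi_{xj}\le 1$, $\sum_j\pi_{+j}^2\le 1$, etc.). The genuine difference is in the point-independent piece. The paper expands $T_1=2R_0$ as $\sum_{i,j}(6\pi_{ij}\pi_{i+}\pi_{+j}-2\pi_{ij}^2-4\pi_{i+}^2\pi_{+j}^2)$ and bounds each sum separately to get $|T_1|<6$; you instead rewrite $R_0=\sum_{i,j}v_{ij}\pi_{i+}\pi_{+j}-\Delta(\pi)$, apply Cauchy--Schwarz, and then bound $\Delta(\pi)$ itself via $\mathrm{dCor}\le 1$ together with the closed form $\mathrm{dVar}^2(X)=r_2+r_2^2-2r_3\le r_2-r_2^2\le 1/4$. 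This buys you the sharper constant $15/2$ in place of $11$, at the price of invoking the distance-correlation Cauchy--Schwarz inequality (valid here because the $0/1$ metric is of negative type). Your parenthetical fallback $\Delta(\pi)\le 1$ is also easy to justify directly---from $|v_{ij}|\le\min(\pi_{i+},\pi_{+j})$ one gets $v_{ij}^2\le\pi_{i+}\pi_{+j}$ and hence $\Delta\le 1$---so the argument can be made fully elementary while still yielding $\gamma(\Delta,\pi)\le 10<11$. Either way, the proof stands.
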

\begin{proof}
By Lemma 1, we decompose $\mbox{IF}[(x,~ y),~ \Delta,~ \pi)]$ into four parts, and then bound each part separately. Let $\mbox{IF}[(x,~ y),~ \Delta,~ \pi)] = T_{1} + T_{2} + T_{3} + T_{4}$, where 

\begin{align*}
T_{1} & = 2\sum_{i=1}^{I}\sum_{j=1}^{J} (\pi_{ij} - \pi_{i+}\pi_{+j})(2\pi_{i+}\pi_{+j} - \pi_{ij}),\\
T_{2} & = - 2\sum_{j=1}^{J} \pi_{+j}(\pi_{xj} - \pi_{x+}\pi_{+j}), \\
T_{3} & = - 2\sum_{i=1}^{I} \pi_{i+}(\pi_{iy} - \pi_{i+}\pi_{+y}), \\
T_{4} & = 2(\pi_{xy} - \pi_{x+}\pi_{+y}).
\end{align*}
First, it is easy to see that $|T_{4}|\leq 1$. For $T_2$, we have 
\begin{align*}
T_{2} & =  -2\sum_{j=1}^{J} \pi_{+j}\pi_{xj} + 2\sum_{j=1}^{J}\pi_{x+}\pi^{2}_{+j}, \\
-2 & < -2\sum_{j=1}^{J} \pi_{+j}\pi_{xj} < 0, \\
0 & < 2\sum_{j=1}^{J}\pi_{x+}\pi^{2}_{+j} < 2.
\end{align*}
Therefore $|T_{2}|<2$. Similarly, $|T_{3}| < 2$. 

Lastly, for $T_{1}$, we have
\begin{align*}
T_{1} & = \sum_{i=1}^{I}\sum_{j=1}^{J} (6\pi_{ij}\pi_{i+}\pi_{+j} - 2\pi_{i+}^{2}\pi_{+j}^{2} - 2\pi_{ij}^{2}),\\
0 & < \sum_{i=1}^{I}\sum_{j=1}^{J} 6\pi_{ij}\pi_{i+}\pi_{+j} < 6, \\
-2 & < - 2\sum_{i=1}^{I}\sum_{j=1}^{J}\pi_{i+}^{2}\pi_{+j}^{2} < 0, \\
-2 & < -2\sum_{i=1}^{I}\sum_{j=1}^{J} \pi_{ij}^{2} < 0.
\end{align*}
Therefore $|T_{1}|<6$, and $\left|\mbox{IF}[(x, y),~ \Delta,~ \pi)]\right|\leq |T_{1}| + |T_{2}| + |T_{3}| + |T_{4}| <11$ for any $I$, $J$ and $\pi$.
\end{proof}

To evaluate the robustness of the Chi-squared functional ($\eta$), Lemma 2 below gives its influence function (proof is given in A.2).
\begin{Lem}
Suppose $X$ and $Y$ are categorical variables with $I$ and $J$ categories and joint distribution $\pi$, Pearson's Chi-squared functional $\eta(\pi)$ has the following influence function
\begin{equation*}
\mbox{IF}[(x,~ y),~ \eta,~ \pi)] = \sum_{i\neq x}\sum_{j\neq y}A_{ij} + \sum_{j\neq y}B_{xj} + \sum_{i\neq x} C_{iy} + D_{xy} ,
\end{equation*}
where 
\begin{align*}
A_{ij} & = 2(\pi_{ij} - 2\pi_{i+}\pi_{+j}), \\
B_{xj} & = - \frac{\pi_{xj}^{2}}{\pi_{x+}\pi_{+j}} - \frac{\pi_{xj}^{2}}{\pi_{x+}^{2}\pi_{+j}} - 3\pi_{x+} + 3\pi_{+j} + 4\pi_{xj}, \\
C_{iy} & = - \frac{\pi_{iy}^{2}}{\pi_{i+}\pi_{+y}} - \frac{\pi_{iy}^{2}}{\pi_{i+}\pi_{+y}^{2}} + 3\pi_{i+} - 3\pi_{+y} + 4\pi_{iy}, \\
D_{xy} & = \frac{\pi_{xy}^{2}}{\pi_{x+}\pi_{+y}^2} + \frac{\pi_{xy}^{2}}{\pi_{x+}^{2}\pi_{+y}} - \frac{2\pi_{xy}}{\pi_{x+}\pi_{+y}} - 2\pi_{xy} + 2\pi_{x+}\pi_{+y} - \pi_{x+} - \pi_{+y} -2.
\end{align*}
\end{Lem}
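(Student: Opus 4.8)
Since this lemma is the analogue of Lemma 1 with $\eta$ in place of $\Delta$, I would follow the same route and compute the G\^{a}teaux derivative of $\eta$ along the contamination path $\pi_\epsilon=(1-\epsilon)\pi+\epsilon\delta_{(x,y)}$. Write $\pi^\epsilon_{ij}=(1-\epsilon)\pi_{ij}+\epsilon\,\mathbf{1}\{i=x,\,j=y\}$, with induced marginals $\pi^\epsilon_{i+}=(1-\epsilon)\pi_{i+}+\epsilon\,\mathbf{1}\{i=x\}$ and $\pi^\epsilon_{+j}=(1-\epsilon)\pi_{+j}+\epsilon\,\mathbf{1}\{j=y\}$, so that at $\epsilon=0$ one has $\dot\pi_{ij}=\mathbf{1}\{i=x,j=y\}-\pi_{ij}$, $\dot\pi_{i+}=\mathbf{1}\{i=x\}-\pi_{i+}$, $\dot\pi_{+j}=\mathbf{1}\{j=y\}-\pi_{+j}$. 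Differentiating $\eta(\pi_\epsilon)=\sum_{i,j}(\pi^\epsilon_{ij}-\pi^\epsilon_{i+}\pi^\epsilon_{+j})^2/(\pi^\epsilon_{i+}\pi^\epsilon_{+j})$ term by term via the quotient rule and setting $\epsilon=0$ gives $\mbox{IF}[(x,y),\eta,\pi]=\sum_{i,j}\big\{\,2(\pi_{ij}-\pi_{i+}\pi_{+j})\dot N_{ij}/(\pi_{i+}\pi_{+j})-(\pi_{ij}-\pi_{i+}\pi_{+j})^2\dot D_{ij}/(\pi_{i+}\pi_{+j})^2\,\big\}$, where $\dot D_{ij}=\dot\pi_{i+}\pi_{+j}+\pi_{i+}\dot\pi_{+j}$ and $\dot N_{ij}=\dot\pi_{ij}-\dot D_{ij}$.

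The key point is that $\dot\pi_{i+}=-\pi_{i+}$ unless $i=x$, $\dot\pi_{+j}=-\pi_{+j}$ unless $j=y$, and $\dot\pi_{ij}=-\pi_{ij}$ unless $(i,j)=(x,y)$; hence the summand takes one of only four shapes, according to whether $i=x$ and/or $j=y$. I would split the double sum into the four blocks $\{i\neq x,\,j\neq y\}$, $\{i=x,\,j\neq y\}$, $\{i\neq x,\,j=y\}$, $\{i=x,\,j=y\}$ and simplify within each; this produces the per-cell contributions $A_{ij}$, $B_{xj}$, $C_{iy}$, $D_{xy}$ of the statement. The first block carries no indicator shift and collapses almost immediately; each off-diagonal block shifts one marginal factor and the numerator, which is where terms of the form $\pi_{xj}^2/(\pi_{x+}^2\pi_{+j})$ arise; the corner cell $(x,y)$ shifts all three quantities and is by far the most laborious, yielding the mixed denominators $\pi_{xy}^2/(\pi_{x+}^2\pi_{+y})$ and $\pi_{xy}^2/(\pi_{x+}\pi_{+y}^2)$ in $D_{xy}$. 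Throughout, the identities $\sum_j\pi_{ij}=\pi_{i+}$, $\sum_i\pi_{ij}=\pi_{+j}$, $\sum_i\pi_{i+}=\sum_j\pi_{+j}=1$ are used to rewrite partial sums over a deleted row or column (for instance $\sum_{j\neq y}\pi_{xj}=\pi_{x+}-\pi_{xy}$), which is what assembles the constant pieces into the displayed forms.

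As a cross-check I would also use the shortcut that $\eta$ is homogeneous of degree zero in $(\pi_{ij})$: in the equivalent form $\eta=\sum_{i,j}\pi_{ij}^2/(\pi_{i+}\pi_{+j})-1$, replacing each $\pi_{ij}$ by $t\pi_{ij}$ leaves $\eta$ unchanged, so Euler's relation gives $\sum_{i,j}\pi_{ij}\,\partial\eta/\partial\pi_{ij}=0$, whence $\mbox{IF}[(x,y),\eta,\pi]=\sum_{i,j}(\partial\eta/\partial\pi_{ij})(\mathbf{1}\{(i,j)=(x,y)\}-\pi_{ij})=\partial\eta/\partial\pi_{xy}$; one then needs only a single partial derivative, being careful that $\pi_{x+}$ and $\pi_{+y}$ themselves contain the cell $\pi_{xy}$. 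Either way, I expect the difficulty to be organizational rather than conceptual — keeping the signs of the indicator shifts straight across the four blocks, and then coaxing the unwieldy corner contribution into the stated $D_{xy}$ — with no new idea needed beyond that already used for Lemma 1. A useful final sanity check is that $B_{xj}$, $C_{iy}$ and $D_{xy}$ blow up as $\pi_{x+}$ or $\pi_{+y}\to 0$, which is precisely the behaviour the ensuing discussion exploits to contrast $\eta$ with the B-robust functional $\Delta$.
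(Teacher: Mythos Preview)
Your proposal is correct and follows essentially the same route as the paper: perturb along $\pi_\epsilon=(1-\epsilon)\pi+\epsilon\delta_{(x,y)}$, differentiate the per-cell summand $\eta_{ij}=(\pi_{ij}-\pi_{i+}\pi_{+j})^2/(\pi_{i+}\pi_{+j})$ at $\epsilon=0$, and split into the four blocks $\{i\neq x,j\neq y\}$, $\{i=x,j\neq y\}$, $\{i\neq x,j=y\}$, $\{i=x,j=y\}$ to read off $A_{ij},B_{xj},C_{iy},D_{xy}$. Your homogeneity/Euler shortcut (using $\eta=\sum_{i,j}\pi_{ij}^2/(\pi_{i+}\pi_{+j})-1$ to reduce the influence function to the single partial $\partial\eta/\partial\pi_{xy}$) is a neat cross-check the paper does not mention, though note that since $\pi_{x+}$ and $\pi_{+y}$ run through the entire $x$-row and $y$-column, that ``single'' derivative still decomposes into the same four blocks.
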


Theorem 2 below implies that $\eta(\pi)$ is generally not B-robust, unless $\pi$ satisfies certain assumptions.
\begin{Thm}
For fixed $I$ and $J$, unless there exists $\pi_{min}>0$ such that $(\min_{i}\pi_{i+})\wedge(\min_{j}\pi_{+j}) > \pi_{min}$, the Chi-squared functional $\eta(\pi)$ is generally not B-robust. For diverging $I$ and $J$, $\eta(\pi)$ is generally not B-robust.
\end{Thm}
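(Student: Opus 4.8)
The plan is to refute B-robustness of $\eta$ in the only sense that is meaningful here, namely uniformly over the family of joint distributions: I would produce a one-parameter family $\pi^{(t)}$, $t\downarrow 0$, whose smallest marginal tends to $0$ and along which $\gamma(\eta,\pi^{(t)})=\sup_{x,y}\bigl|\mathrm{IF}[(x,y),\eta,\pi^{(t)}]\bigr|\to\infty$. This is the right reading, in contrast with the universal constant $11$ of Theorem~1: for a single fixed $\pi$ with positive marginals and fixed $(I,J)$ the influence function of Lemma~2 is a finite maximum over finitely many cells, so $\gamma(\eta,\pi)<\infty$ trivially, and the content of the theorem is precisely that no such bound survives once marginals are allowed to vanish.

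First I would isolate the dangerous terms in Lemma~2. Every summand is $O(1)$ except the ratios $\pi_{xj}^2/(\pi_{x+}^2\pi_{+j})$, $\pi_{iy}^2/(\pi_{i+}\pi_{+y}^2)$ in $B_{xj},C_{iy}$ and, above all, the triple $\pi_{xy}^2/(\pi_{x+}\pi_{+y}^2)+\pi_{xy}^2/(\pi_{x+}^2\pi_{+y})-2\pi_{xy}/(\pi_{x+}\pi_{+y})$ sitting in $D_{xy}$. Writing $\pi_{xy}=\theta$, $\pi_{x+}=a$, $\pi_{+y}=b$ with $\theta\le a\wedge b$, this triple equals $\frac{\theta}{ab}\bigl(\frac{\theta(a+b)}{ab}-2\bigr)$, which blows up as $a,b\downarrow0$ provided $\theta/a$ and $\theta/b$ stay bounded away from $0$ \emph{and} from $1$. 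The pitfall to avoid is the exact cancellation at $\theta=a=b$: e.g.\ for a $2\times2$ table with $\pi_{12}=\pi_{21}=0$, $\pi_{22}=t$ the three terms of $D_{22}$ reduce to $\tfrac1t+\tfrac1t-\tfrac2t=0$, so the off-diagonal cells must carry mass of the same order $t$ but with a different constant to break the symmetry.

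Accordingly I would take $I=J=2$ and, for $0<t<\tfrac13$, set $\pi^{(t)}_{11}=1-3t$, $\pi^{(t)}_{12}=\pi^{(t)}_{21}=\pi^{(t)}_{22}=t$, so that $\pi^{(t)}_{1+}=\pi^{(t)}_{+1}=1-2t$ and $\pi^{(t)}_{2+}=\pi^{(t)}_{+2}=2t$. Evaluating Lemma~2 at $(x,y)=(2,2)$, where the sums collapse to the single terms $A_{11},B_{21},C_{12},D_{22}$, a routine computation gives $D_{22}=-\tfrac{1}{4t}+O(1)$ while $A_{11},B_{21},C_{12}$ each converge to a finite limit as $t\downarrow0$; hence $\mathrm{IF}[(2,2),\eta,\pi^{(t)}]=-\tfrac{1}{4t}+O(1)\to-\infty$ and $\gamma(\eta,\pi^{(t)})\ge\tfrac1{4t}-O(1)\to\infty$. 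Since $(\min_i\pi^{(t)}_{i+})\wedge(\min_j\pi^{(t)}_{+j})=2t\downarrow0$, this shows there is no bound on $\gamma(\eta,\cdot)$ unless one confines attention to distributions with $(\min_i\pi_{i+})\wedge(\min_j\pi_{+j})>\pi_{\min}$ for some fixed $\pi_{\min}>0$; on such a class Lemma~2 does yield a bound depending on $\pi_{\min}$ and $(I,J)$, which is exactly the ``unless'' clause.

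For diverging $(I,J)$ there is essentially nothing more to do: $\min_i\pi_{i+}\le 1/I\to0$ and $\min_j\pi_{+j}\le1/J\to0$, so the exceptional clause can never hold, and concretely the $2\times2$ family above embeds into $I\times J$ tables by spreading a mass $o(t)$ over the remaining cells to keep all marginals strictly positive, after which continuity of the influence function in $\pi$ makes the same blow-up at $(2,2)$ persist. The only step I expect to require genuine care is the computation in the preceding paragraph — specifically checking that the $1/t$ singularity of $D_{22}$ is not neutralized by $B_{21}$ or $C_{12}$ — since, as noted, the naive symmetric construction fails precisely on that point; everything else is bookkeeping.
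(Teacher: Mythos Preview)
Your proposal is correct, and for the fixed-dimension counterexample it coincides with the paper's: your $2\times 2$ table with $\pi_{22}=t$, $\pi_{2+}=\pi_{+2}=2t$ is exactly the $I=J=2$ instance of the paper's family $\pi_{xy}=\beta$, $\pi_{x+}=\pi_{+y}=2\beta$, $\pi_{xj}=\beta/(J-1)$, $\pi_{iy}=\beta/(I-1)$, and both arrive at $D_{xy}=-\tfrac{1}{4\beta}+8\beta^{2}-6\beta-2\to-\infty$ while the $A$, $B$, $C$ contributions stay bounded. The paper also spells out the bounds under the $\pi_{\min}$ hypothesis that you cover in one line.

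The one genuine difference is the diverging-$(I,J)$ argument. You reduce it to the fixed-dimension case by embedding the $2\times2$ block in a larger table and invoking the $D_{22}$ singularity again. The paper instead exhibits a \emph{separate} blow-up mechanism: it takes $\pi_{xy}=\pi_{x+}=\pi_{+y}=\alpha$ fixed with $\pi_{xj}=\pi_{iy}=0$ off the cross --- precisely the symmetric configuration $\theta=a=b$ you correctly flagged as rendering $D_{xy}$ harmless (indeed $D_{xy}=2\alpha^{2}-4\alpha-2$ here) --- and then shows $\sum_{j\ne y}B_{xj}=3-3\alpha-3(J-1)\alpha\to-\infty$ as $J\to\infty$. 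So the paper's diverging-case construction isolates an \emph{extensive} effect, the sheer number of categories in the $B$- and $C$-sums, rather than a vanishing denominator in $D_{xy}$. Your embedding route still proves the theorem and is arguably more economical; the paper's route buys the additional insight that large $I,J$ already defeat B-robustness even when the marginals at the contamination point $(x,y)$ are held away from zero.
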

\begin{proof}
Using notations from Lemma 2, for any $(I,~ J,~\pi)$, we have
\begin{align*}
\left|\sum_{i\neq x}\sum_{j\neq y}A_{ij}\right| & =  \left|\sum_{i=1}^{I}\sum_{j=1}^{J} (2\pi_{ij} - 4\pi_{i+}\pi_{+j}) - \sum_{j=1}^{J}(2\pi_{xj}-4\pi_{x+}\pi_{+j}) - \sum_{i=1}^{I}(2\pi_{iy}-4\pi_{i+}\pi_{+y}) + (2\pi_{xy} - 4\pi_{x+}\pi_{+y})\right|\\
& = \left|-2 + 2\pi_{x+} + 2\pi_{+y} + 2\pi_{xy} - 4\pi_{x+}\pi_{+y}\right| \\
& < 6.
\end{align*}
We first show that for fixed $I$ and $J$, if there exists $\pi_{min}>0$ such that $(\min_{i}\pi_{i+})\wedge(\min_{j}\pi_{+j}) > \pi_{min}$, then $\eta(\pi)$ is B-robust. For $B_{xj}$, we have 
\begin{align*}
\left|\sum_{j\neq y}B_{xj}\right| & <  \left|\sum_{j\neq y} \frac{\pi_{xj}^{2}}{\pi_{x+}\pi_{+j}}\right| + \left|\sum_{j\neq y}\frac{\pi_{xj}^{2}}{\pi_{x+}^{2}\pi_{+j}}\right| +\left| \sum_{j\neq y} (3\pi_{+j} - 3\pi_{x+} + 4\pi_{xj}) \right| \\
& < J + \frac{J}{\pi_{min}} + 3 + 7J \\
& < \frac{J}{\pi_{min}} + 8J + 3.
\end{align*}
Similarly, for $C_{iy}$ and $D_{xy}$, we have 
\begin{equation*}
\left|\sum_{i\neq x} C_{iy}\right| < \frac{I}{\pi_{min}} + 8I + 3,
\end{equation*}
and
\begin{equation*}
\left| D_{xy}\right| < \frac{4}{\pi_{min}} + 8.
\end{equation*}
Therefore $\eta(\pi)$ is B-robust. 

Next we show that for fixed $I$ and $J$, $\eta(\pi)$ is not always B-robust without the condition on $(\min_{i}\pi_{i+})\wedge(\min_{j}\pi_{+j})$. We give a counterexample here. Let $c$ be a positive constant such that 
$$(\min_{i\neq x}\pi_{i+})\wedge(\min_{j\neq y}\pi_{+j})>c.$$ 
Here it is noteworthy that $c$ is different from $\pi_{min}$, as it is for $i\neq x$ and $j\neq y$. In addition, let $\pi_{xy} = \beta$, $\pi_{x+} = \pi_{+y} = 2\beta$, $\pi_{xj} = \beta/(J-1)$ for $j\neq y$, $\pi_{iy} = \beta/(I-1)$ for $i\neq x$, and $\beta\rightarrow 0$, then we have
\begin{align*}
\left|\sum_{j\neq y}B_{xj}\right| & < \frac{J}{c} + 8J + 3, \\
\left|\sum_{i\neq x}C_{iy}\right| & < \frac{I}{c} + 8I + 3, \\
D_{xy} = & -\frac{1}{4\beta} + 8\beta^2 -6\beta -2\rightarrow -\infty.
\end{align*}
Therefore $\eta(\pi)$ is not B-robust. 

Finally, we show that for diverging $I$ and $J$, $\eta(\pi)$ is generally not B-robust. Let $\pi_{xy} = \pi_{x+} = \pi_{+y} = \alpha>0$, $\pi_{xj} = 0$ for $j\neq y$, $\pi_{iy} = 0$ for $i\neq x$, then we have
\begin{align*}
D_{xy} & =  2\alpha^2 -4\alpha - 2, \\ 
\sum_{j\neq y}B_{xj} & = 3 - 3\alpha - 3(J-1)\alpha \rightarrow -\infty, ~\text{as}~ J\rightarrow \infty\\
\sum_{i\neq x}C_{iy} & = 3 - 3\alpha - 3(I-1)\alpha \rightarrow -\infty, ~\text{as}~ I\rightarrow \infty. 
\end{align*}
Therefore $\eta(\pi)$ is generally not B-robust.
\end{proof}
As a comparison, Theorems 1 and 2 imply that although both $\eta$ and $\Delta$ are valid dependence metrics (0 if and only if $X$ and $Y$ are independent), $\Delta$ is generally less sensitive to extreme values in $\pi$ than $\eta$. Intuitively, this is because $\eta$ is a sum of ratio statistics with $\pi_{i+}\pi_{+j}$ in the denominators, which can become unstable when $\pi_{i+}$ or $\pi_{+j}$ is very small, a common occurrence in large sparse tables. Therefore $\Delta$ is preferable to $\eta$ in terms of functional robustness.

\section{Strong screening consistency}
In this section, we consider feature screening for high dimensional categorical data, where both the response and features are categorical \cite{FanLv, FanSong, LiZhongZhu, Huangetal}. Huang et al. (2014) proposed a screening procedure based on Pearson's Chi-squared functional ($\eta$) and demonstrated its strong screening consistency (terminology of Fan \& Lv, 2008). Here we explore using distance covariance functional for feature screening. We prove that the distance covariance screening also achieves screening consistency, under conditions similar to, but slightly weaker than, those in \cite{Huangetal}. Inspired by change-point detection, we suggest a change-point based method for tuning parameter selection. Our simulation studies confirm that these methods perform well in high-dimensional settings with relatively low average cell counts. 

\subsection{Problem formulation}
We begin by introducing the notations and formulating the problem. Let $Y\in\{1,~ ...,~ J\}$ be the the response variable, and $S = \{X_{1},~ ...,~ X_{K}\}$ be the vector of $K$ features, where $X_{k} \in \{1,~ ...,~ I_{k}\}$. Let $\pi_{i_{k}j} = P(X_{k} = i_{k}, Y = j)$ and $\pi_{i_{k}+} = P(X_{k} = i_{k})$ denote the joint and marginal probabilities. Let $X_{m} = (X_{m1},~ ...,~ X_{mK})$, $m = 1,~ ...,~ n$, be $n$ independent samples, then $\pi_{i_{k}j}$ and $\pi_{i_{k}+}$ can be estimated by corresponding sample proportions, denoted by $\widehat{\pi}_{i_{k}j}$ and $\widehat{\pi}_{i_{k}+}$. Finally, we define the true model, denoted by $S_{T}$, as a non-empty subset of features in $S$, such that $X_{k}\in S_{T}$ if and only if $X_{k}$ and $Y$ are dependent.

To measure the dependence between $X_{k}$ and $Y$, define $\Delta_{k} = \sum_{i_{k} = 1}^{I_{k}} \sum_{j=1}^{J} (\pi_{i_{k}j} - \pi_{i_{k}+}\pi_{+j})^{2}$. Higher values of $\Delta_{k}$ indicate stronger dependence between $X_{k}$ and $Y$. For feature screening, we can consider the following maximum likelihood estimator ($\widehat{\Delta}_{k}$) and unbiased estimator ($\widetilde{\Delta}_{k}$) 
\begin{equation*}
\widehat{\Delta}_{k} = \sum_{i_{k} = 1}^{I_{k}} \sum_{j=1}^{J} (\widehat{\pi}_{i_{k}j} - \widehat{\pi}_{i_{k}+}\widehat{\pi}_{+j})^{2},
\end{equation*}
\begin{align*}
\widetilde{\Delta}_{k} & = \frac{n}{n-1} \sum_{i_{k} = 1}^{I_{k}} \sum_{j=1}^{J} (\widehat{\pi}_{i_{k}j} - \widehat{\pi}_{i_{k}+}\widehat{\pi}_{+j})^{2} - \frac{4n}{(n-2)(n-3)}\sum_{i_{k} = 1}^{I_{k}} \sum_{j=1}^{J}\widehat{\pi}_{i_{k}j}\widehat{\pi}_{i_{k}+}\widehat{\pi}_{+j}  \\
& + \frac{n}{(n-1)(n-3)}\left( \sum_{i_{k} = 1}^{I_{k}}\widehat{\pi}_{i_{k}+}^{2} + \sum_{j=1}^{J}\widehat{\pi}_{+j}^{2} \right) + \frac{n(3n-2)}{(n-1)(n-2)(n-3)}\left(\sum_{i_{k} = 1}^{I_{k}}\widehat{\pi}_{i_{k}+}^{2}\right)\left(\sum_{j=1}^{J}\widehat{\pi}_{+j}^{2}\right)  \\
& - \frac{n}{(n-1)(n-3)}.
\end{align*}
Obviously, features with larger values of $\widehat{\Delta}_{k}$ or $\widetilde{\Delta}_{k}$ are more likely to be relevant, therefore we can estimate the true model $S_{T}$ by $\widehat{S}(C_{1}) = \{k:~ \widehat{\Delta}_{k}>C_{1} \}$ or $\widetilde{S}(C_{2}) =  \{k:~ \widetilde{\Delta}_{k}>C_{2} \}$, where $C_{1}>0$ and $C_{2}>0$ are some predefined constants.

\subsection{Theoretical results}
We will now demonstrate that under certain conditions, the proposed methods based on $\widehat{S}(C_{1})$ and $\widetilde{S}(C_{2})$ can consistently identify the true model $S_{T}$. Let $\omega_{i_{k}j} = \mbox{cov}\{I(X_{k} = i_{k}), I(Y = j)\}$, we proceed by assuming the following conditions
\begin{itemize}
\item[(1)] There exists a positive constant $I_{max}$, such that $\max_{k}I_{k} < I_{max}$ and $J < I_{max}$.
\item[(2)] There exists a positive constant $\omega^{2}_{min}$, such that $\min_{k}\max_{i_{k}, j} \omega^{2}_{i_{k}j} > \omega^{2}_{min}$.
\item[(3)] $\log K = o(n)$.
\end{itemize}
Condition (1) assumes that the number of categories for both the response and all features are finite, as dealing with a diverging number of categories (either $J$ or $I_{k}$) in feature screening is theoretically challenging. For example, Huang et al. (2014) also assumed a finite $J$ and $I_{k}$. Conditions (2) and (3) are same as those in \cite{Huangetal}. Condition (2) requires that, for any $X_{k}\in S_{T}$, there exists at least one response category $j$ and one feature category $i_{k}$ that are correlated with marginal covariance at least $\omega^{2}_{min}$. Condition (3) requires that the feature dimension $K$ diverges no faster than the exponential of sample size. Notably, our conditions are slightly weaker than those in \cite{Huangetal}, because we relax the assumptions on response probabilities (see Condition C1 in their work). This advantage is due to the distance covariance functional not having $\pi_{+j}$ in the denominator. Consequently, it is robust to extremely small or large response probabilities.

We have the following theorem (detailed proof is given in A.3).
\begin{Thm}
Under Conditions (1)-(3), there exist positive constants $C_{1}$ and $C_{2}$, such that 
\begin{align*}
P[\widehat{S}(C_{1}) = S_{T}] \rightarrow 1, \\
P[\widetilde{S}(C_{2}) = S_{T}] \rightarrow 1,
\end{align*}
as $n\rightarrow \infty$.
\end{Thm}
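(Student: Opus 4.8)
The plan is to reduce the statement to a single uniform concentration bound for the estimators and then separate the relevant from the irrelevant features by a deterministic gap supplied by Condition~(2). First I would record the population separation. Since $\omega_{i_k j} = \pi_{i_k j} - \pi_{i_k+}\pi_{+j}$, we have $\Delta_k = \sum_{i_k,j}\omega_{i_k j}^2$; hence $\Delta_k = 0$ for every $k \notin S_T$, while Condition~(2) forces $\Delta_k \ge \max_{i_k,j}\omega_{i_k j}^2 > \omega_{min}^2$ for every $k \in S_T$. Fix $C_1 = C_2 = C := \omega_{min}^2/2$ and set $t := \omega_{min}^2/4$. On the event $E_t := \{\max_{1\le k\le K}|\widehat{\Delta}_k - \Delta_k| < t\}$ one has $\widehat{\Delta}_k > 3\omega_{min}^2/4 > C$ for every $k \in S_T$ and $\widehat{\Delta}_k < t < C$ for every $k \notin S_T$, i.e. $\widehat{S}(C) = S_T$ exactly; so it suffices to prove $P(E_t)\to 1$ and an analogue for $\widetilde{\Delta}_k$.

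Next I would establish uniform concentration of the MLE. By Condition~(1), for each $k$ the quantity $\widehat{\Delta}_k$ is the same fixed degree-four polynomial of the at most $I_{max}^2$ empirical joint probabilities $\widehat{\pi}_{i_k j}$ (the marginals $\widehat{\pi}_{i_k+},\widehat{\pi}_{+j}$ being functions of these), all lying in $[0,1]$; on this compact domain the polynomial is Lipschitz with a constant $L$ depending only on $I_{max}$, so $|\widehat{\Delta}_k - \Delta_k| \le L\max_{i_k,j}|\widehat{\pi}_{i_k j} - \pi_{i_k j}|$. Each $\widehat{\pi}_{i_k j}$ is an average of $n$ i.i.d.\ $\{0,1\}$ indicators, so Hoeffding's inequality plus a union bound over the at most $I_{max}^2$ cells gives $P(|\widehat{\Delta}_k - \Delta_k| \ge t) \le c_1 e^{-c_2 n t^2}$ with $c_1,c_2$ depending only on $I_{max}$. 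A union bound over $k=1,\dots,K$ then yields
\[
P\big(E_t^{\,c}\big) \;\le\; c_1 K\, e^{-c_2 n t^2} \;=\; c_1\exp\!\big(\log K - c_2 n t^2\big) \;\longrightarrow\; 0
\]
by Condition~(3), since $t$ is a fixed positive constant; hence $P[\widehat{S}(C) = S_T]\to 1$.

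Then I would transfer the conclusion to $\widetilde{\Delta}_k$. Inspecting its definition, every coefficient there is $O(1/n)$ and $\tfrac{n}{n-1}-1 = \tfrac{1}{n-1}$, while each empirical sum multiplying those coefficients is bounded by an absolute constant (for instance $\sum_{i_k,j}\widehat{\pi}_{i_k j}\widehat{\pi}_{i_k+}\widehat{\pi}_{+j}\le 1$, $\sum_{i_k}\widehat{\pi}_{i_k+}^2\le 1$, $(\sum_{i_k}\widehat{\pi}_{i_k+}^2)(\sum_j\widehat{\pi}_{+j}^2)\le 1$, and $\sum_{i_k,j}(\widehat{\pi}_{i_k j}-\widehat{\pi}_{i_k+}\widehat{\pi}_{+j})^2\le 4$). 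Consequently $\max_{1\le k\le K}|\widetilde{\Delta}_k - \widehat{\Delta}_k| \le c_3/n$ pathwise for an absolute constant $c_3$ — this is the claim ``$\widetilde{\Delta}-\widehat{\Delta}=O(1/n)$'' referred to earlier. For $n$ large enough that $c_3/n < \omega_{min}^2/4$, on the event $E_t$ we get $\widetilde{\Delta}_k > \omega_{min}^2/2 = C$ for $k\in S_T$ and $\widetilde{\Delta}_k < \omega_{min}^2/2 = C$ for $k\notin S_T$, so $\widetilde{S}(C)=S_T$; therefore $P[\widetilde{S}(C)=S_T]\to 1$ as well. (Alternatively, one can run the Hoeffding–union-bound argument directly on $\widetilde{\Delta}_k$, which is again a bounded Lipschitz function of the same empirical probabilities.)

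The hard part will be the uniformity in the second step: the tail bound for $|\widehat{\Delta}_k - \Delta_k|$ must be exponential in $n$ with constants $c_1,c_2$ that do not depend on $k$, so that the union bound survives the regime $\log K = o(n)$. This is exactly where Condition~(1) enters — it caps the number of cells and the Lipschitz constant of $\widehat{\Delta}_k$ uniformly over $k$; without a uniform bound on $I_k$ and $J$ neither could be controlled. The remaining ingredients, namely the population gap from Condition~(2) and the $O(1/n)$ discrepancy $\widetilde{\Delta}_k-\widehat{\Delta}_k$, are elementary bookkeeping.
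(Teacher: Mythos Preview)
Your proposal is correct and follows essentially the same route as the paper: reduce $|\widehat{\Delta}_k-\Delta_k|$ to the uniform deviation of the empirical cell probabilities, apply an exponential concentration inequality plus a union bound over $k$ and cells (you use Hoeffding where the paper uses Bernstein), invoke the gap from Condition~(2) to set the threshold, and then transfer to $\widetilde{\Delta}_k$ via the uniform $O(1/n)$ discrepancy. The only cosmetic differences are your Lipschitz/polynomial bound in place of the paper's explicit algebraic chain, and your simpler lower bound $\Delta_k>\omega_{min}^2$ versus the paper's $\Delta_k>2J\omega_{min}^2$ (with the corresponding choice of threshold).
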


\subsection{Tuning parameter selection}
Tuning parameter selection (e.g., $C_{1}$ and $C_{2}$) is crucial for many feature screening procedures. Various data-driven methods have been proposed, including the maximum ratio criterion introduced by Huang et al. (2014). Huang et al.'s method uses a sorted sequence of $\{\widehat{\eta}_{k}\}_{1\leq k\leq K}$, denoted by $\widehat{\eta}_{(K)}\geq \cdots \geq \widehat{\eta}_{(1)}$, to identify a threshold ($C$) based on the largest ratio between consecutive elements
\begin{equation*}
C = \left\{\widehat{\eta}_{(k+1)},~ k = \underset{K-1\geq r\geq 1}{\arg\max} \frac{\widehat{\eta}_{(r+1)}}{\widehat{\eta}_{(r)}}\right\}.
\end{equation*}
While this method works well for smaller tables with sufficient sample size (e.g., $I_{k} = J = 3$, $n = 100$), it can struggle with larger and sparser tables (e.g., $I_{k} = J = 10$, $n = 50$). In these cases, the resulting threshold often leads to an inaccurate number of selected features (diverging significantly from the true model size). We propose an alternative method based on change-point detection. Similar to the maximum ratio method, we first sort the estimated statistics, i.e., $\widehat{\Delta}_{(K)}\geq \cdots \geq \widehat{\Delta}_{(1)}$. The tuning parameter $C_{1}$ can then be estimated as the point where the sorted sequence exhibits a significant change. Change-point detection methods like the two-piece linear model (conveniently implemented using R package \textit{segmented}) performs well in our simulations. 

The rationale behind this approach is that the distributions of $\{\widehat{\Delta}_{k}:~k\in S_{T}\}$ (features relevant to the response) and $\{\widehat{\Delta}_{k}:~k\in S\setminus S_{T}\}$ (irrelevant features) are expected to differ, with the former having stochastically greater values. Consequently, the sorted sequence often exhibits a change-point where the slope changes significantly. This method consistently provides a more stable estimate of the true model size $|S_{T}|$ compared to the ratio-based method, especially for larger and sparser tables (numerical examples are given in Section 3.4).  

\subsection{Simulation study}
We conducted a simulation study to assess the performance of the distance covariance screening under high-dimensional settings. In particular, we compare it with Pearson's Chi-squared screening by \cite{Huangetal} using ROC curves, as well as sensitivity and specificity based on selected tuning parameters by the change-point method that we proposed in Section 3.3. As discussed earlier, the maximum likelihood estimate ($\widehat{\Delta}$) and unbiased estimate ($\widetilde{\Delta}$) are asymptotically equivalent. For computational efficiency, we focus on $\widehat{\Delta}$ in this study, as our simulations indicate that $\widetilde{\Delta}$ has almost the same ROC curve. 

We consider the following simulation settings with dimension $|S| = K = 10,000$.
\begin{itemize}
\item Setting 1: $|S_{T}|/|S| = 5\%$, $(I_{k},~J) = (8,~ 8)$. For each $X_{k}\in S_{T}$, assign $\pi_{i_{k}j} = 1/20$ for 10 randomly selected $(i_{k},~j)$ pairs, and $\pi_{i_{k}j} = 1/108$ for the remaining pairs. For $X_{k}\in S\setminus S_{T}$, assign $\pi_{i_{k}j} = 1/64$. Sample sizes are $n = 25, ~50,~75,~100$ (average cell count ranges from 0.4 to 1.6).
\item Setting 2: $|S_{T}|/|S| = 5\%$, $(I_{k},~J) = (10,~ 10)$. For each $X_{k}\in S_{T}$, assign $\pi_{i_{k}j} = 1/50$ for 20 randomly selected $(i_{k},~j)$ pairs, and $\pi_{i_{k}j} = 1/150$ for the remaining pairs. For $X_{k}\in S\setminus S_{T}$, assign $\pi_{i_{k}j} = 1/100$. Sample sizes are $n = 25, ~50,~75,~100$ (average cell count ranges from 0.25 to 1).
\item Setting 3: Same as Setting 1, but $|S_{T}|/|S| = 10\%$. 
\item Setting 4: Same as Setting 2, but $|S_{T}|/|S| = 10\%$. 
\end{itemize}

Figures 1-4 depict the ROC curves for both distance covariance screening and Pearson's Chi-squared screening across all simulation settings. Table 1 summarizes the corresponding area under the ROC curve (AUC). Distance covariance screening consistently outperforms Pearson's Chi-squared screening, particularly for sparser tables (lower average cell counts). For example, in Setting 1 with a sample size of $25$ (resulting in an average cell count as low as 0.4), distance covariance achieves a significantly higher AUC of 0.776 compared to 0.658 for Pearson's Chi-squared screening. Similarly, in Setting 2 with $n=25$ (average cell count of only 0.25), the AUC values are 0.625 and 0.566 for distance covariance and Pearson's Chi-squared, respectively. As expected, both methods perform well with larger sample sizes (e.g., $n=100$).

Beyond AUC, we compared the sensitivity and specificity of both methods using a common tuning parameter selection. As mentioned in Section 3.3, the maximum ratio method by \cite{Huangetal} can be unstable for large sparse tables. In contrast, the change-point method consistently provides reasonable estimates for the true model size ($|S_{T}|$). For instance, in Setting 1 with the distance covariance screening and $n = 50$, the maximum ratio method estimated $\widehat{|S_{T}|} \approx 6,900$, while the change-point method (implemented by R function \textit{segmented}) yielded a much more accurate estimate of $\widehat{|S_{T}|} \approx 470$. Table 1 summarizes the sensitivity and specificity results. Here, distance covariance screening demonstrates consistent improvement for both metrics. For instance, in Setting 3 with $n=75$, the distance covariance screening achieved a sensitivity of 0.87 and a specificity of 0.981, compared to 0.791 and 0.979 by Pearson's Chi-squared screening. 

The simulation study also revealed limitations for both methods under extremely low average cell counts (high dimensionality and small sample size). In Setting 2 with $n=25$ (average cell count of 0.25), the sensitivities were as low as 0.17 (distance covariance) and 0.11 (Pearson's Chi-squared). This is due to the increased variances in the estimates ($\widehat{\Delta}_{k}$ and $\widehat{\eta}_{k}$). Consequently, it becomes more challenging to distinguish between relevant and irrelevant features using either method.

\begin{center}
[Figure 1 about here]
\end{center}

\begin{center}
[Figure 2 about here]
\end{center}

\begin{center}
[Figure 3 about here]
\end{center}

\begin{center}
[Figure 4 about here]
\end{center}

\begin{center}
[Table 1 about here]
\end{center}

\section{Approximate null distribution}
This section investigates hypothesis testing using sample distance correlation. As mentioned in Section 1, the two permutation tests outperform the traditional Pearson's Chi-squared test for large and sparse contingency tables. However, their computational cost is a major drawback due to the absence of an asymptotic theory for analytical p-value calculation. 

Here, we explore two approaches to approximate the null distribution of the bias-corrected distance correlation estimate. These include the Chi-squared approximation with one degree of freedom introduced by \cite{Shenetal} and a novel weighted Chi-squared approximation. For an $I\times J$ table, our weighted Chi-squared approximation requires calculating eigenvalues for two square matrices of dimensions $I\times I$ and $J\times J$, which is significantly more efficient than the $n\times n$ matrix computations required for continuous data. Notably, closed-form solutions for eigenvalues exist when $I$ and $J$ are less than or equal to 3. Our simulations demonstrate that both methods perform well for relatively large and sparse tables, with the Chi-squared method being slightly conservative.

We first introduce the bias-corrected estimate of distance correlation. The unbiased estimate of distance covariance is simply $\widetilde{\Delta}$. The unbiased estimate of squared distance variance $\widetilde{\Omega}$ is presented by Edelmann et al. (2020), and as a special case, we give the expression of $\widetilde{\Omega}$ for categorical variables. By Equations 2.1-2.7 in \cite{DStd} 
\begin{equation*}
\widetilde{\Omega} = \frac{T_{1}}{n(n-3)} - \frac{2T_{2}}{n(n-2)(n-3)} + \frac{T_{3}}{n(n-1)(n-2)(n-3)},
\end{equation*}
where
\begin{align*}
T_{1} & = \sum_{l=1}^{n}\sum_{m=1}^{n} \| X_{l} - X_{m} \|^2, \\
T_{2} & = \sum_{l=1}^{n} \sum_{m=1}^{n} \sum_{q=1}^{n} \| X_{l} - X_{m} \| \cdot \| X_{l} - X_{q} \| , \\
T_{3} & = \left(\sum_{l=1}^{n}\sum_{m=1}^{n} \| X_{l} - X_{m} \|\right)^2. 
\end{align*}

The expression of $\widetilde{\Omega}$ for categorical variables is given below (derivation is provided in A.4)
\begin{align*}
\widetilde{\Omega} = & \frac{n^3}{(n-1)(n-2)(n-3)}\left( 1-\sum_{i = 1}^{I}\widehat{\pi}_{i}^{2} \right)^2 -\frac{2n^2}{(n-2)(n-3)} \sum_{i = 1}^{I}\widehat{\pi}_{i}^{3}  \\
& - \frac{n(n-6)}{(n-2)(n-3)}\sum_{i = 1}^{I}\widehat{\pi}_{i}^{2} - \frac{n(n+2)}{(n-2)(n-3)}.
\end{align*}

By \cite{Shenetal, Zhangetal}, we have 
\begin{equation*}
\frac{n\widetilde{\Delta}(X, Y)}{\sqrt{\widetilde{\Omega}(X)\widetilde{\Omega}(Y)}}\longrightarrow \sum_{l=1}^{\infty}\sum_{m=1}^{\infty} \omega_{lm}(Z^{2}_{lm}-1),
\end{equation*}
as $n\rightarrow\infty$, where $Z_{lm}$'s are independent standard normal random variables, and the weights $\omega_{lm}$'s only depend on the marginal distributions of $X$ and $Y$. To compute $\omega_{lm}$'s, let $d(X_{l},~ X_{m})$ be a distance metric such that $d(X_{l},~ X_{m}) = 0$ if $X_{l} = X_{m}$ and $d(X_{l},~ X_{m}) = 1$ otherwise. Let $\mathbf{D}^{X}$ be the $n\times n$ distance matrix of $X$ with $\mathbf{D}^{X}_{lm} = d(X_{l},~ X_{m})$, and $\mathbf{H} = \mathbf{I} - \mathbf{J}/n$, where $\mathbf{I}$ represents the identity matrix and $\mathbf{J}$ the matrix of ones, $\{\lambda_{l}\}$ and $\{\mu_{m}\}$ are the limiting eigenvalues of $\mathbf{H}\mathbf{D}^{X}\mathbf{H}/n$ and $\mathbf{H}\mathbf{D}^{Y}\mathbf{H}/n$, respectively, and 
\begin{equation*}
\omega_{lm} = \frac{\lambda_{l}\mu_{m}}{\sqrt{\sum_{l=1}^{\infty}\lambda^{2}_{l}\sum_{m=1}^{\infty}\mu^{2}_{m}}}.
\end{equation*}
It can be shown that under $d(\cdot, \cdot)$, $\{\lambda_{l}\}$ and $\{\mu_{m}\}$ are all non-positive, therefore 
\begin{equation*}
\sum_{l=1}^{\infty}\sum_{m=1}^{\infty}\omega_{lm}  = 1
\end{equation*}
and $\sum_{l=1}^{\infty}\sum_{m=1}^{\infty} \omega_{lm}Z^{2}_{lm}$ is a weighted sum of Chi-squared distributions with $df=1$. For continuous data, calculating $\omega_{lm}$'s generally requires solving the eigenvalues of two $n\times n$ matrices, which can be computationally expensive. This cost, however, is significantly reduced for categorical variables, making the approximation feasible.

Since $\mathbf{H}$ is the centering matrix, it follows that $\mathbf{H}\mathbf{D}^{X}\mathbf{H}$ shares the same eigenvalues as $\mathbf{D}^{X}\mathbf{H}$. The matrices $\mathbf{D}^{X}$ and $\mathbf{D}^{X}\mathbf{H}$ can be simplified as follows 
\begin{equation*}
\mathbf{D}^{X} = \begin{bmatrix} 
    \mathbf{0}_{(n_{1+})\times (n_{1+})} & \mathbf{1}_{(n_{1+})\times (n_{2+})} & \dots & \mathbf{1}_{(n_{1+})\times (n_{I+})} \\
     \mathbf{1}_{(n_{2+})\times (n_{1+})} & \mathbf{0}_{(n_{2+})\times (n_{2+})} & \dots & \mathbf{1}_{(n_{2+})\times (n_{I+})} \\
    \vdots & \vdots  & \ddots & \vdots \\
    \mathbf{1}_{(n_{I+})\times (n_{1+})} &  \mathbf{1}_{(n_{I+})\times (n_{2+})} & \dots & \mathbf{0}_{(n_{I+})\times (n_{I+})}  
    \end{bmatrix},
 \end{equation*}   
 
 \begin{equation*}
\mathbf{D}^{X}\mathbf{H} = \begin{bmatrix} 
    (\widehat{\pi}_{1+}-1)\mathbf{1}_{(n_{1+})\times (n_{1+})} & \widehat{\pi}_{1+}\mathbf{1}_{(n_{1+})\times (n_{2+})} & \dots & \widehat{\pi}_{1+}\mathbf{1}_{(n_{1+})\times (n_{I+})} \\
    \widehat{\pi}_{2+} \mathbf{1}_{(n_{2+})\times (n_{1+})} & (\widehat{\pi}_{2+}-1)\mathbf{1}_{(n_{2+})\times (n_{2+})} & \dots & \widehat{\pi}_{2+}\mathbf{1}_{(n_{2+})\times (n_{I+})} \\
    \vdots & \vdots  & \ddots & \vdots \\
    \widehat{\pi}_{I+}\mathbf{1}_{(n_{I+})\times (n_{1+})} &  \widehat{\pi}_{I+}\mathbf{1}_{(n_{I+})\times (n_{2+})} & \dots & (\widehat{\pi}_{I+}-1)\mathbf{1}_{(n_{I+})\times (n_{I+})}  
    \end{bmatrix}.
 \end{equation*}   
 
We observe that $\mathbf{D}^{X}\mathbf{H}$ has a constant block structure, therefore it has the same eigenvalues as 
  \begin{equation*}
 \begin{bmatrix} 
    (\widehat{\pi}_{1+}-1)n_{1+} & \widehat{\pi}_{1+}\sqrt{n_{1+}n_{2+}} & \dots & \widehat{\pi}_{1+}\sqrt{n_{1+}n_{I+}} \\
    \widehat{\pi}_{2+} \sqrt{n_{2+}n_{1+}} & (\widehat{\pi}_{2+}-1)n_{2+} & \dots & \widehat{\pi}_{2+} \sqrt{n_{2+}n_{I+}} \\
    \vdots & \vdots  & \ddots & \vdots \\
    \widehat{\pi}_{I+}\sqrt{n_{I+}n_{1+}} &  \widehat{\pi}_{I+}\sqrt{n_{I+}n_{2+}} & \dots & (\widehat{\pi}_{I+}-1)n_{I+}  
    \end{bmatrix},
 \end{equation*}  
 
or equivalently

  \begin{equation*}
 n\begin{bmatrix} 
    (\widehat{\pi}_{1+}-1)\widehat{\pi}_{1+} & \widehat{\pi}_{1+}\sqrt{\widehat{\pi}_{1+}\widehat{\pi}_{2+}} & \dots & \widehat{\pi}_{1+}\sqrt{\widehat{\pi}_{1+}\widehat{\pi}_{I+}} \\
    \widehat{\pi}_{2+} \sqrt{\widehat{\pi}_{2+}\widehat{\pi}_{1+}} & (\widehat{\pi}_{2+}-1)\widehat{\pi}_{2+} & \dots & \widehat{\pi}_{2+} \sqrt{\widehat{\pi}_{2+}\widehat{\pi}_{I+}} \\
    \vdots & \vdots  & \ddots & \vdots \\
    \widehat{\pi}_{I+}\sqrt{\widehat{\pi}_{I+}\widehat{\pi}_{1+}} &  \widehat{\pi}_{I+}\sqrt{\widehat{\pi}_{I+}\widehat{\pi}_{2+}} & \dots & (\widehat{\pi}_{I+}-1)\widehat{\pi}_{I+} 
    \end{bmatrix},
 \end{equation*}  
 
therefore $\{\lambda_{l}\}$ are the eigenvalues of the following limiting matrix

  \begin{equation*}
 \begin{bmatrix} 
    (\pi_{1+}-1)\pi_{1+} & \pi_{1+}\sqrt{\pi_{1+}\pi_{2+}} & \dots & \pi_{1+}\sqrt{\pi_{1+}\pi_{I+}} \\
    \pi_{2+} \sqrt{\pi_{2+}\pi_{1+}} & (\pi_{2+}-1)\pi_{2+} & \dots & \pi_{2+} \sqrt{\pi_{2+}\pi_{I+}} \\
    \vdots & \vdots  & \ddots & \vdots \\
    \pi_{I+}\sqrt{\pi_{I+}\pi_{1+}} &  \pi_{I+}\sqrt{\pi_{I+}\pi_{2+}} & \dots & (\pi_{I+}-1)\pi_{I+} 
    \end{bmatrix}.
 \end{equation*}  
 
The matrix above has a rank of $I-1$ and $I-1$ non-zero eigenvalues. In the special case of $I=2$, we have $\lambda_{1} = -2\pi_{1+}\pi_{2+}$ and $\lambda_{i} = 0$ for $i\geq 2$. For $I = 3$, 
\begin{align*}
\lambda_{1} & = -(\pi_{1+}\pi_{2+}+\pi_{1+}\pi_{3+} + \pi_{2+}\pi_{3+}) - \sqrt{\pi_{1+}^{2}\pi_{2+}^{2}+\pi_{1+}^{2}\pi_{3+}^{2} + \pi_{2+}^{2}\pi_{3+}^{2}-\pi_{1+}\pi_{2+}\pi_{3+}}, \\
\lambda_{2} & = -(\pi_{1+}\pi_{2+}+\pi_{1+}\pi_{3+} + \pi_{2+}\pi_{3+}) + \sqrt{\pi_{1+}^{2}\pi_{2+}^{2}+\pi_{1+}^{2}\pi_{3+}^{2} + \pi_{2+}^{2}\pi_{3+}^{2}-\pi_{1+}\pi_{2+}\pi_{3+}}, 
\end{align*}
and $\lambda_{i} = 0$ for $i\geq 3$.

The asymptotic null distribution of $n\widetilde{\Delta}(X, Y)/\sqrt{\widetilde{\Omega}(X)\widetilde{\Omega}(Y)}$ can be then approximated by $\sum_{i=1}^{I-1}\sum_{j=1}^{J-1} \widehat{\omega}_{ij}Z^{2}_{ij}-1$, where $\widehat{\omega}_{ij}$ can be obtained from the estimated marginal probabilities, i.e., $\{\widehat{\pi}_{i+}\}$ and $\{\widehat{\pi}_{+j}\}$. We use simulated data to evaluate the performance of the two approximate null distribution, including Shen et al.'s Chi-squared approximation and our weighted Chi-squared approximation. We consider the following two simulation settings:

\begin{itemize}
\item Null setting 1: $(I,~J) = (8,~ 8)$. Assign $\pi_{ij} = 1/64$, $8\geq i\geq 1$, $8\geq j\geq 1$. Sample sizes are $n = 32, ~64,~96,~128$ (average cell counts are $0.5,~1,~1.5,~2$).
\item Null setting 2: $(I,~J) = (10,~ 10)$. Assign $\pi_{ij} = 1/100$, $10\geq i\geq 1$, $10\geq j\geq 1$. Sample sizes are $n = 50,~100,~150,~200$ (average cell counts are $0.5,~1,~1.5,~2$).
\end{itemize}

Figures 5 and 6 display the Quantile-Quantile (QQ) plots of the two approximations evaluated over $2,000$ simulations. These plots demonstrate that the weighted Chi-squared method accurately captures the null distribution of $n\widetilde{\Delta}(X, Y)/\sqrt{\widetilde{\Omega}(X)\widetilde{\Omega}(Y)}$, even for sparse tables (e.g., with average cell count as low as 0.5). 

We further compared the empirical type I error rates obtained from $20,000$ simulations. As shown in Figure 7, both methods maintain the desired type I error rate of 5\%. The Chi-squared method exhibits slight conservativeness. This, as Figures 5 and 6 suggest, is because the Chi-squared distribution with $df=1$ has a heavier right tail compared to the weighted chi-squared. However, it is important to note that the Chi-squared approximation's conservativeness is mild (average type I error rate around 0.039), resulting in satisfactory statistical power. 

\begin{center}
[Figure 5 about here]
\end{center}

\begin{center}
[Figure 6 about here]
\end{center}

\begin{center}
[Figure 7 about here]
\end{center}

Figure 8 explores the statistical power under two simulation settings similar to those in Section 3.3:
\begin{itemize}
\item Alternative setting 1: $(I,~J) = (8,~ 8)$. Assign $\pi_{ij} = 1/20$ for 10 randomly selected $(i,~j)$ pairs, and $\pi_{ij} = 1/108$ for the remaining pairs. Sample sizes are $n = 32, ~64,~96,~128$ (average cell counts are $0.5,~1,~1.5,~2$).
\item Alternative setting 2: $(I,~J) = (10,~ 10)$. Assign $\pi_{ij} = 1/50$ for 20 randomly selected $(i,~j)$ pairs, and $\pi_{ij} = 1/150$ for the remaining pairs. Sample sizes are $n = 50,~100,~150,~200$ (average cell counts are $0.5,~1,~1.5,~2$).
\end{itemize}

Here, our weighted Chi-squared test achieves statistical power nearly identical to the permutation test. The Chi-squared test also delivers satisfactory power, although its mild conservativeness leads to slightly lower power in all settings. For example, in alternative setting 1 with $n = 64$, the Chi-squared test exhibits a power of 84.6\%, compared to 87.5\% for the permutation test and 87.1\% for the weighted chi-squared test. Notably, for relatively large sample sizes, the three methods become almost equally powerful.

\begin{center}
[Figure 8 about here]
\end{center}

\section{Discussion and conclusions}
Detecting the association between two categorical variables is a durable research topic in statistics. The traditional Pearson's Chi-squared test can be unreliable for large and sparse tables due to violations of normality assumptions. Recent proposals, like the distance covariance permutation test and U-statistic permutation test, have demonstrably higher power under sparse conditions. Notably, both tests employ the distance covariance functional, but with slightly different estimators. While these tests have shown promise, the theoretical properties of the distance covariance for categorical data remain unclear. In this paper, we address this gap by investigating key statistical properties of this promising functional. Firstly, we demonstrate that distance covariance is a more robust measure of dependence compared to Pearson's Chi-squared, particularly for large and sparse tables. Secondly, we establish the strong screening consistency of this functional under mild assumptions, and showcase its effectiveness through simulations. Finally, we derive an accurate null distribution for a bias-corrected estimate of distance correlation, even for sparse tables. Our findings shed light on this valuable yet understudied dependence measure, facilitating its applications to large-scale data analysis.

We discuss here a potential extension of the presented method for conditional feature screening. In Section 3, we focused on marginal correlations. However, in practice, confounders often exist and should be controlled for when assessing the association between a predictor and response. If $W$ is a categorical confounder with $Q$ levels, the distance covariance can be extended for conditional feature screening as follows
\begin{equation*}
\Delta(X_{k}, Y|W) = \sum_{q = 1}^{Q}\sum_{i_{k} = 1}^{I_{k}}\sum_{j = 1}^{J} (\pi_{i_{k}j|q} - \pi_{i_{k}+|q}\pi_{+j|q})^{2},
\end{equation*}
where $\pi_{ij|q}$, $\pi_{i+|q}$, and $\pi_{+j|q}$ denote the conditional probabilities given level $q$ of $W$. Notably, $\Delta(X_{k}, Y|W) = 0$ if and only if $X_{k}$ is independent of $Y$ conditional on $W$. This property makes it suitable for conditional feature screening. The case of multiple confounders $(W_{1},~...,~W_{s})$ presents a greater challenge. Here, the sample size within each $X_{k}-Y$ partial table conditional on all confounders can be very small, leading to unreliable estimates of conditional probabilities. Addressing conditional feature screening with multiple confounders and limited data requires further investigation. 

Nevertheless, when sample size permits, the conditional distance covariance can be readily applied to conditional feature screening. An immediate application is the network learning with categorical variables. Let $(W_{1},~...,~W_{s})$ represent a subset of the predictors that are correlated with $X_{k}$ or $Y$. Under the assumption of Markov property, a normalized version of $\Delta(X_{k},~ Y|W_{1},~...,~W_{s})$ could be used as a correlation metric within existing network learning algorithms like the PC algorithm \cite{PC}. However, this requires substantial theoretical and numerical exploration, which we leave for future research.

\section*{Competing Interests}
\noindent
The author has declared that no competing interests exist.

\section*{Acknowledgement}
\noindent
The work was supported by an NSF DBI Biology Integration Institute (BII) grant (award no. 2119968).

\section*{Appendix}
\subsection*{A.1. Proof of Lemma 1}
To derive $\mbox{IF}[(x,~ y),~ \Delta,~ \pi)]$, we divide the cells into four groups, namely $(i\neq x, j\neq y)$, $(i = x, j\neq y)$, $(i\neq x, j = y)$, and $(i = x, j = y)$. For cell $(i,~ j)$, let $\Delta_{ij} = (\pi_{ij} - \pi_{i+}\pi_{+j})^2$ and $\Delta = \sum_{i=1}^{I}\sum_{j=1}^{J}\Delta_{ij}$. After the $\epsilon$-modifications, the joint probabilities become $(1-\epsilon)\pi_{xy} + \epsilon$ for $(x, y)$ and $(1-\epsilon)\pi_{ij}$ for the remaining cells. 

Let $\Delta_{ij\epsilon}$ be the value of $\Delta_{ij}$ after the $\epsilon$-modifications. For $(i\neq x, j\neq y)$, we have 
\begin{align*}
& \Delta_{ij\epsilon} = \left[(1-\epsilon)\pi_{ij} - (1-\epsilon)^{2}\pi_{i+}\pi_{+j}\right]^{2}, \\
& \Delta_{ij\epsilon} - \Delta_{ij} = \epsilon\left[ (2-\epsilon)\pi_{ij} - \pi_{i+}\pi_{+j} - (1-\epsilon)^{2}\pi_{i+}\pi_{+j}\right] \left[ (2-\epsilon)\pi_{i+}\pi_{+j} -\pi_{ij}  \right], \\
& \lim_{\epsilon\rightarrow 0} \frac{\Delta_{ij\epsilon} - \Delta_{ij}}{\epsilon} = 2(\pi_{ij} - \pi_{i+}\pi_{+j})(2\pi_{i+}\pi_{+j} -\pi_{ij}).
\end{align*}
For $(i = x, j\neq y)$, we have 
\begin{align*}
& \Delta_{xj\epsilon} = \left[(1-\epsilon)\pi_{xj} - (1-\epsilon)^{2}\pi_{x+}\pi_{+j} - \epsilon(1-\epsilon)\pi_{+j} \right]^{2}, \\
& \lim_{\epsilon\rightarrow 0} \frac{\Delta_{xj\epsilon} - \Delta_{xj}}{\epsilon} = 2(\pi_{xj} - \pi_{x+}\pi_{+j})(2\pi_{x+}\pi_{+j} -\pi_{xj} - \pi_{+j}).
\end{align*}
For $(i \neq x, j = y)$, we have 
\begin{align*}
& \Delta_{iy\epsilon} = \left[(1-\epsilon)\pi_{iy} - (1-\epsilon)^{2}\pi_{i+}\pi_{+y} - \epsilon(1-\epsilon)\pi_{+y} \right]^{2}, \\
& \lim_{\epsilon\rightarrow 0} \frac{\Delta_{iy\epsilon} - \Delta_{iy}}{\epsilon} = 2(\pi_{iy} - \pi_{i+}\pi_{+y})(2\pi_{i+}\pi_{+y} -\pi_{iy} - \pi_{i+}).
\end{align*}
For $(i = x, j = y)$, we have 
\begin{align*}
& \Delta_{xy\epsilon} = \left\{ (1-\epsilon)\pi_{xy} + \epsilon - [(1-\epsilon)\pi_{x+} + \epsilon][(1-\epsilon)\pi_{+y} + \epsilon] \right\}^{2}, \\
& \lim_{\epsilon\rightarrow 0} \frac{\Delta_{xy\epsilon} - \Delta_{xy}}{\epsilon} = 2(\pi_{xy} - \pi_{x+}\pi_{+y})(2\pi_{x+}\pi_{+y} - \pi_{x+} - \pi_{+y} - \pi_{xy} +1).
\end{align*}
Summarizing the results above, we have 
\begin{align*}
\mbox{IF}[(x,~ y),~ \Delta,~ \pi)] = & \sum_{i\neq x}\sum_{j\neq y}\lim_{\epsilon\rightarrow 0} \frac{\Delta_{ij\epsilon} - \Delta_{ij}}{\epsilon} + \sum_{i\neq x}\lim_{\epsilon\rightarrow 0} \frac{\Delta_{iy\epsilon} - \Delta_{iy}}{\epsilon} +\sum_{j\neq y}\lim_{\epsilon\rightarrow 0} \frac{\Delta_{xj\epsilon} - \Delta_{xj}}{\epsilon} + \lim_{\epsilon\rightarrow 0} \frac{\Delta_{xy\epsilon} - \Delta_{xy}}{\epsilon} \\
 = & 2\sum_{i=1}^{I}\sum_{j=1}^{J} (\pi_{ij} - \pi_{i+}\pi_{+j})(2\pi_{i+}\pi_{+j} - \pi_{ij}) - 2\sum_{j=1}^{J} \pi_{+j}(\pi_{xj} - \pi_{x+}\pi_{+j}) - 2\sum_{i=1}^{I} \pi_{i+}(\pi_{iy} - \pi_{i+}\pi_{+y}) \\
& + 2(\pi_{xy} - \pi_{x+}\pi_{+y}).
\end{align*}

\subsection*{A.2. Proof of Lemma 2}
Similar to the proof of Lemma 1, we divide the cells into four groups, $(i\neq x, j\neq y)$, $(i = x, j\neq y)$, $(i\neq x, j = y)$, and $(i = x, j = y)$. Let $\eta_{ij} = (\pi_{ij} - \pi_{i+}\pi_{+j})^{2}/(\pi_{i+}\pi_{+j})$, and $\eta_{ij\epsilon}$ be the value of $\eta_{ij}$ after the $\epsilon$-modifications. For $(i\neq x, j\neq y)$, we have
\begin{align*}
& \eta_{ij\epsilon} = \frac{[(1-\epsilon)\pi_{ij} - (1-\epsilon)^{2}\pi_{i+}\pi_{+j} ]^2}{(1-\epsilon)^{2}\pi_{i+}\pi_{+j}}, \\
& \eta_{ij\epsilon} - \eta_{ij} = 2\epsilon\left[  \pi_{ij} - (2-\epsilon)\pi_{i+}\pi_{+j} \right], \\
& \lim_{\epsilon\rightarrow 0} \frac{\eta_{ij\epsilon} - \eta_{ij}}{\epsilon} = 2\left(  \pi_{ij} - 2\pi_{i+}\pi_{+j} \right).
\end{align*}
For $(i = x, j = y)$, we have 
\begin{align*}
& \eta_{xy\epsilon} = \frac{\left\{ (1-\epsilon)\pi_{xy} + \epsilon - [(1-\epsilon)\pi_{x+} + \epsilon][(1-\epsilon)\pi_{+y} + \epsilon] \right\}^{2}}{[(1-\epsilon)\pi_{x+} + \epsilon][(1-\epsilon)\pi_{+y} + \epsilon]}, \\
& \lim_{\epsilon\rightarrow 0} \frac{\eta_{xy\epsilon} - \eta_{xy}}{\epsilon} = \frac{\pi_{xy}^{2}}{\pi_{x+}\pi_{+y}^2} + \frac{\pi_{xy}^{2}}{\pi_{x+}^{2}\pi_{+y}} - \frac{2\pi_{xy}}{\pi_{x+}\pi_{+y}} - 2\pi_{xy} + 2\pi_{x+}\pi_{+y} - \pi_{x+} - \pi_{+y} -2.
\end{align*}
For $(i = x, j\neq y)$ and $(i \neq x, j = y)$, we have
\begin{align*}
& \lim_{\epsilon\rightarrow 0} \frac{\eta_{xj\epsilon} - \eta_{xj}}{\epsilon} =  - \frac{\pi_{xj}^{2}}{\pi_{x+}\pi_{+j}} - \frac{\pi_{xj}^{2}}{\pi_{x+}^{2}\pi_{+j}} - 3\pi_{x+} + 3\pi_{+j} + 4\pi_{xj}, \\
& \lim_{\epsilon\rightarrow 0} \frac{\eta_{iy\epsilon} - \eta_{iy}}{\epsilon} =  - \frac{\pi_{iy}^{2}}{\pi_{i+}\pi_{+y}} - \frac{\pi_{iy}^{2}}{\pi_{i+}\pi_{+y}^{2}} + 3\pi_{i+} - 3\pi_{+y} + 4\pi_{iy}.
\end{align*}
This completes the proof.

\subsection*{A.3. Proof of Theorem 3}
\begin{proof}
We first show that $P[\widehat{S}(C_{1}) = S_{T}] \rightarrow 1$ as $n\rightarrow\infty$. The first step is to establish the uniform consistency of $\widehat{\Delta}_{k}$. Note that
\begin{align*}
\left| \widehat{\Delta}_{k} - \Delta_{k} \right| & = \left| \sum_{i_{k} = 1}^{I_{k}} \sum_{j=1}^{J} (\widehat{\pi}_{i_{k}j} - \widehat{\pi}_{i_{k}+}\widehat{\pi}_{+j})^{2} - \sum_{i_{k} = 1}^{I_{k}} \sum_{j=1}^{J}(\pi_{i_{k}j} - \pi_{i_{k}+}\pi_{+j})^{2} \right| \\
& < 2\left|  \sum_{i_{k} = 1}^{I_{k}} \sum_{j=1}^{J} (\widehat{\pi}_{i_{k}j} - \widehat{\pi}_{i_{k}+}\widehat{\pi}_{+j}) - \sum_{i_{k} = 1}^{I_{k}} \sum_{j=1}^{J}(\pi_{i_{k}j} - \pi_{i_{k}+}\pi_{+j}) \right| \\
& \leq 2\sum_{i_{k} = 1}^{I_{k}} \sum_{j=1}^{J} \left| \widehat{\pi}_{i_{k}j} - \pi_{i_{k}j} \right| + 2\sum_{i_{k} = 1}^{I_{k}} \sum_{j=1}^{J} \left| \widehat{\pi}_{i_{k}+}\widehat{\pi}_{+j} - \pi_{i_{k}+}\pi_{+j}\right|  \\
& \leq 2\sum_{i_{k} = 1}^{I_{k}} \sum_{j=1}^{J} \left| \widehat{\pi}_{i_{k}j} - \pi_{i_{k}j} \right| + 2\sum_{i_{k} = 1}^{I_{k}} \sum_{j=1}^{J} \left| \widehat{\pi}_{i_{k}+}\widehat{\pi}_{+j}  - \pi_{i_{k}+}\widehat{\pi}_{+j} + \pi_{i_{k}+}\widehat{\pi}_{+j} - \pi_{i_{k}+}\pi_{+j}\right| \\
& \leq 2 \sum_{i_{k} = 1}^{I_{k}} \sum_{j=1}^{J} \left| \widehat{\pi}_{i_{k}j} - \pi_{i_{k}j} \right| + 2\sum_{i_{k} = 1}^{I_{k}} \left| \widehat{\pi}_{i_{k}} - \pi_{i_{k}} \right| + 2\sum_{j=1}^{J} \left| \widehat{\pi}_{j} - \pi_{j} \right| \\
& \leq 6 \sum_{i_{k} = 1}^{I_{k}} \sum_{j=1}^{J} \left| \widehat{\pi}_{i_{k}j} - \pi_{i_{k}j} \right|.
\end{align*} 
It suffices to show the uniform consistency of $\widehat{\pi}_{i_{k}j}$. Similar to Huang et al. (2014), define $Z_{mi_{k}j} = I(X_{mk} = i_{k}) - \pi_{i_{k}j}$, where $m = 1,~...,~n$. It is easy to see that $|Z_{mi_{k}j}|<1$, $E(Z_{mi_{k}j}) = 0$ and $E(Z^{2}_{mi_{k}j}) = \pi_{i_{k}j} - \pi_{i_{k}j}^{2}$. By Bernstein's inequality, for any $\epsilon>0$, we have 
\begin{equation*}
P\left( \left| \frac{1}{n}\sum_{m=1}^{n} Z_{mi_{k}j} \right|>\epsilon \right) \leq 2\exp\left( -\frac{6n\epsilon^2}{4\epsilon + 3} \right).
\end{equation*} 
Note that $\widehat{\pi}_{i_{k}j} - \pi_{i_{k}j} = (1/n)\sum_{m=1}^{n} Z_{mi_{k}j} $, we have
\begin{align*}
P\left( \max_{j}\max_{k}\max_{i_{k}}\left| \widehat{\pi}_{i_{k}j} - \pi_{i_{k}j} \right|>\epsilon \right) & \leq \sum_{j=1}^{J}\sum_{k=1}^{K}\sum_{i_{k}=1}^{I_{k}} P\left( \frac{1}{n}\left| \sum_{m=1}^{n} Z_{mi_{k}j} \right| > \epsilon \right)  \\
& \leq  2K I_{max} J\cdot\exp\left( -\frac{6n\epsilon^2}{4\epsilon + 3} \right) \\
& \leq 2I_{max} J\cdot\exp\left( \log K-\frac{6n\epsilon^2}{4\epsilon + 3} \right). 
\end{align*}
By Condition (3), $\widehat{\pi}_{i_{k}j}$ is uniformly consistent, therefore $\widehat{\Delta}_{k}$ is uniformly consistent. By Condition (2), $\min_{k}\max_{i_{k}, j} \omega^{2}_{i_{k}j} > \omega^{2}_{min}$, therefore we have $\min_{k}\Delta_{k} > 2J\omega^{2}_{min}$ for $k\in S_{T}$. Let $C_{1} = J\omega^{2}_{min}$, by the uniform consistency of $\widehat{\Delta}_{k}$, we have $S_{T}\subset\widehat{S}(C_{1})$ almost surely. 

Next, we show $\widehat{S}(C_{1})\subset S_{T}$ almost surely. Suppose there exists $k^{*}\in \widehat{S}(C_{1})$, but  $k^{*}\notin S_{T}$, then $\widehat{\Delta}_{k^{*}}>J\omega^{2}_{min}$ and $\Delta_{k^{*}} = 0$. For $k^{*}$, we have $|\widehat{\Delta}_{k^{*}}-\Delta_{k^{*}}|>J\omega^{2}_{min}$. Let $\epsilon = J\omega^{2}_{min}$, we have
\begin{equation*}
P[S_{T}\not\subset\widehat{S}(C_{1})] \leq P(\max_{k}|\widehat{\Delta}_{k} - \Delta_{k}| > \epsilon) \rightarrow 0,
\end{equation*}
as $n\rightarrow\infty$, leading to a contradiction, therefore $\widehat{S}(C_{1})\subset S_{T}$ almost surely, and $P[\widehat{S}(C_{1}) = S_{T}]\rightarrow 1$, as $n\rightarrow\infty$.

The strong screening consistency of $\widetilde{S}(C_{2})$ can be established in the same way. It suffices to show that $\widetilde{\Delta}_{k}$ is uniformly consistent. In the definition of $\widetilde{\Delta}_{k}$, we can see 
\begin{align*}
\sum_{i_{k} = 1}^{I_{k}} \sum_{j=1}^{J} (\widehat{\pi}_{i_{k}j} - \widehat{\pi}_{i_{k}+}\widehat{\pi}_{+j})^{2} & \leq 2, \\
\sum_{i_{k} = 1}^{I_{k}} \sum_{j=1}^{J}\widehat{\pi}_{i_{k}j}\widehat{\pi}_{i_{k}+}\widehat{\pi}_{+j} & \leq 1,  \\
\sum_{i_{k} = 1}^{I_{k}}\widehat{\pi}_{i_{k}+}^{2} + \sum_{j=1}^{J}\widehat{\pi}_{+j}^{2} & \leq 2, \\ 
\left(\sum_{i_{k} = 1}^{I_{k}}\widehat{\pi}_{i_{k}+}^{2}\right)\left(\sum_{j=1}^{J}\widehat{\pi}_{+j}^{2}\right) & \leq 1.  
\end{align*}
Therefore $\widetilde{\Delta}_{k} = \widehat{\Delta}_{k} + O(1/n)$, and $|\widetilde{\Delta}_{k} - \Delta_{k}|\leq |\widehat{\Delta}_{k} - \Delta_{k}| + O(1/n)$, indicating that $\widetilde{\Delta}_{k}$ is also uniformly consistent.  This completes the proof.
\end{proof}

\subsection*{A.4. Derivation of $\widetilde{\Omega}$}
Let $n_{(m)} = \sum_{l=1}^{n}I(X_{l} = X_{m})$ for $m = 1,~...,~ n$, then we have 
\begin{align*}
T_{1} & = \sum_{m=1}^{n}(n-n_{(m)}) = n^{2} - \sum_{i=1}^{I}n_{i+}^{2} ,\\
T_{2} & = \sum_{m=1}^{n}(n-n_{(m)})^{2} = n^{3} -2n\sum_{i=1}^{I}n_{i+}^{2} + \sum_{i=1}^{I}n_{i+}^{3}, \\
T_{3} & =  \left( n^{2} - \sum_{i=1}^{I}n_{i+}^{2} \right)^{2} = n^{4} + \left( \sum_{i=1}^{I}n_{i+}^{2} \right)^2 - 2n^{2}\sum_{i=1}^{I}n_{i+}^{2}.
\end{align*}
Further we have 
\begin{align*}
\frac{T_{1}}{n(n-3)} & = \frac{n}{n-3} - \frac{n}{n-3}\sum_{i=1}^{I}\widehat{\pi}_{i+}^{2}, \\
\frac{2T_{2}}{n(n-2)(n-3)} & = \frac{2n^{2}}{(n-2)(n-3)} - \frac{4n^{2}}{(n-2)(n-3)}\sum_{i=1}^{I}\widehat{\pi}_{i+}^{2} + \frac{2n^{2}}{(n-2)(n-3)}\sum_{i=1}^{I}\widehat{\pi}_{i+}^{3}, \\
\frac{T_{3}}{n(n-1)(n-2)(n-3)} & = \frac{n^{3}}{(n-1)(n-2)(n-3)}\left( 1- \sum_{i=1}^{I}\widehat{\pi}_{i+}^{2} \right)^2.
\end{align*}
As $\widetilde{\Omega} = \frac{T_{1}}{n(n-3)} - \frac{2T_{2}}{n(n-2)(n-3)} + \frac{T_{3}}{n(n-1)(n-2)(n-3)}$, we have
\begin{align*}
\widetilde{\Omega} = & \frac{n^3}{(n-1)(n-2)(n-3)}\left( 1-\sum_{i = 1}^{I}\widehat{\pi}_{i+}^{2} \right)^2 -\frac{2n^2}{(n-2)(n-3)} \sum_{i = 1}^{I}\widehat{\pi}_{i+}^{3}  \\
& - \frac{n(n-6)}{(n-2)(n-3)}\sum_{i = 1}^{I}\widehat{\pi}_{i+}^{2} - \frac{2n(n+2)}{(n-2)(n-3)}.
\end{align*}

\newpage
\section*{Tables and Figures}
\begin{table}[ht]
\centering
\begin{tabular}{c|c|c|c|c|c|c}
  \hline  \hline
$|S_{T}|/|S|$  & $(I_{k}, J)$ & sample size & AUC-$\hat{\Delta}$ & AUC-$\hat{\eta}$ & (sensitivity, specificity)-$\hat{\Delta}$ & (sensitivity, specificity)-$\hat{\eta}$  \\ 
  \hline
5\% & (8, 8) & 25 & 0.776 & 0.658 & (0.362, 0.948) & (0.216, 0.947) \\
5\% & (8, 8) & 50 & 0.939 & 0.919 & (0.652, 0.982) & (0.598, 0.972) \\
5\% & (8, 8) & 75 & 0.985 & 0.974 & (0.854, 0.986) & (0.788, 0.981) \\
5\% & (8, 8) & 100 & 0.997 & 0.991 & (0.956, 0.990) & (0.893, 0.787) \\
\hline
5\% & (10, 10) & 25 & 0.625 & 0.566 & (0.170, 0.930) & (0.112, 0.924) \\
5\% & (10, 10) & 50 & 0.718 & 0.680 & (0.280, 0.939) & (0.218, 0.939) \\
5\% & (10, 10) & 75 & 0.811 & 0.783 & (0.386, 0.948) & (0.320, 0.945) \\
5\% & (10, 10) & 100 & 0.892 & 0.874 & (0.538, 0.963) & (0.484, 0.957) \\
  \hline
10\% & (8, 8) & 25 & 0.783 & 0.662 & (0.339, 0.961) & (0.212, 0.956) \\
10\% & (8, 8) & 50 & 0.939 & 0.912 & (0.634, 0.982) & (0.558, 0.971) \\
10\% & (8, 8) & 75 & 0.982 & 0.974 & (0.870, 0.981) & (0.791, 0.979) \\
10\% & (8, 8) & 100 & 0.995 & 0.991 & (0.965, 0.976) & (0.943, 0.976) \\
\hline
10\% & (10, 10) & 25 & 0.612 & 0.542 & (0.163, 0.934) & (0.106, 0.931) \\
10\% & (10, 10) & 50 & 0.739 & 0.704 & (0.247, 0.956) & (0.203, 0.949) \\
10\% & (10, 10) & 75 & 0.827 & 0.803 & (0.408, 0.955) & (0.359, 0.950) \\
10\% & (10, 10) & 100 & 0.877 & 0.863 & (0.447, 0.971) & (0.408, 0.969) \\
   \hline  \hline
\end{tabular}
\caption{AUC, sensitivity and specificity of distance covariance and Pearson's Chi-squared screenings, where the sensitivity and specificity are based on change-point selected tuning parameters.}
\end{table}

\newpage
\begin{figure}[!htbp]
\begin{center}
\includegraphics[scale=0.5]{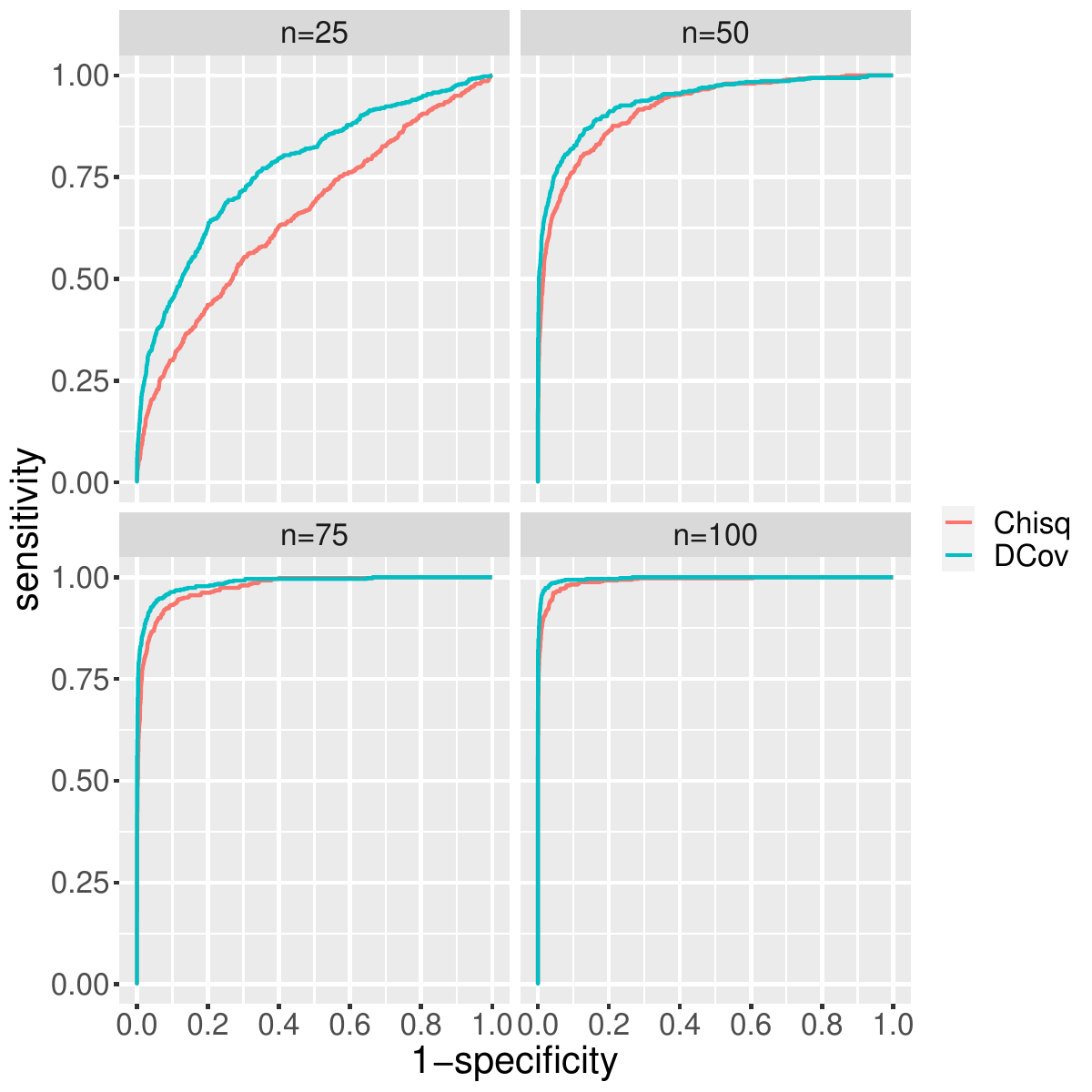}
\end{center}
\caption{ROC curves for distance covariance and Pearson's Chi-squared screenings under simulation setting 1 ($I_{k} = J = 8$ and $|S_{T}|/|S| = 5\%$).
}
\end{figure}

\newpage
\begin{figure}[!htbp]
\begin{center}
\includegraphics[scale=0.5]{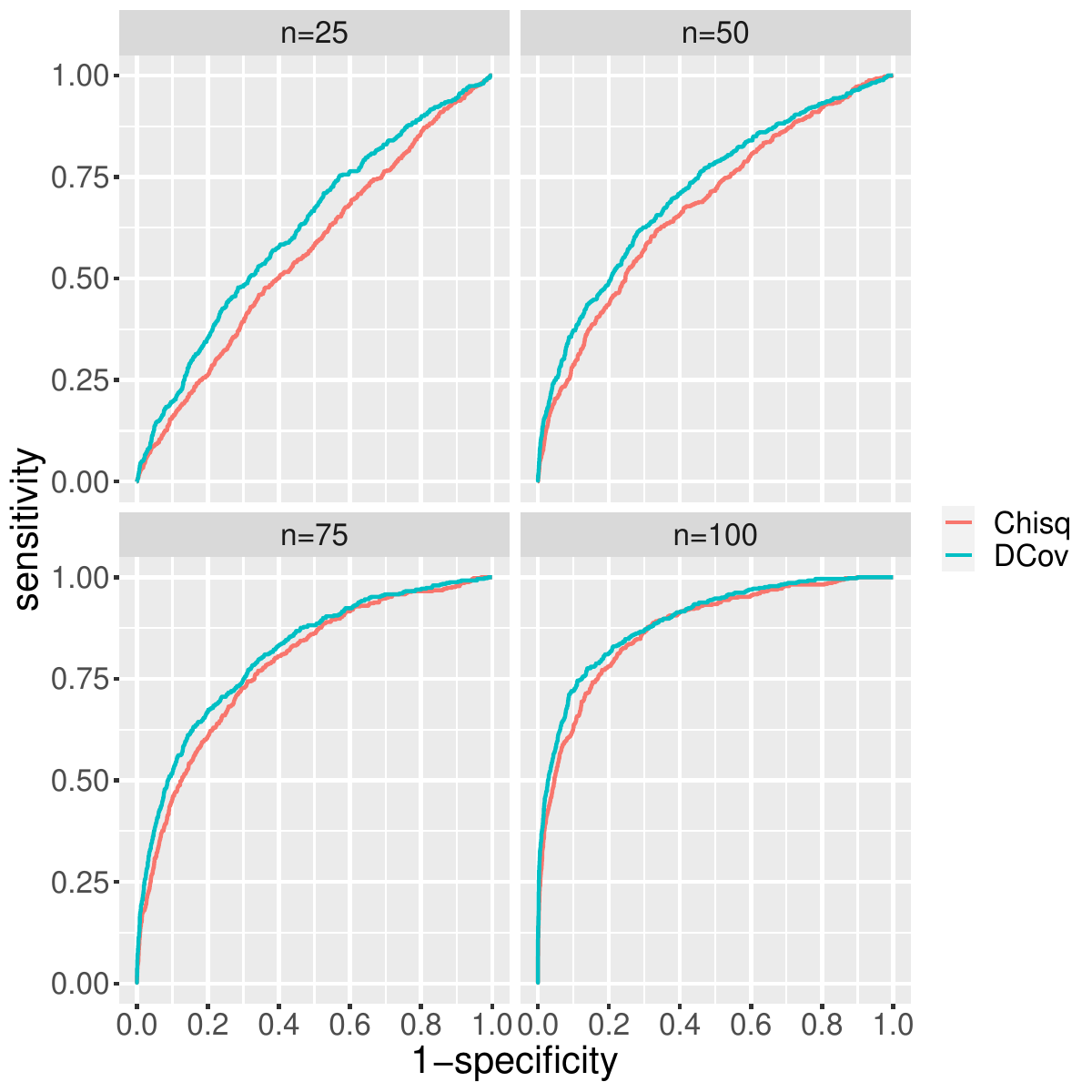}
\end{center}
\caption{ROC curves for distance covariance and Pearson's Chi-squared screenings under simulation setting 2 ($I_{k} = J = 10$ and $|S_{T}|/|S| = 5\%$).
}
\end{figure}

\newpage
\begin{figure}[!htbp]
\begin{center}
\includegraphics[scale=0.5]{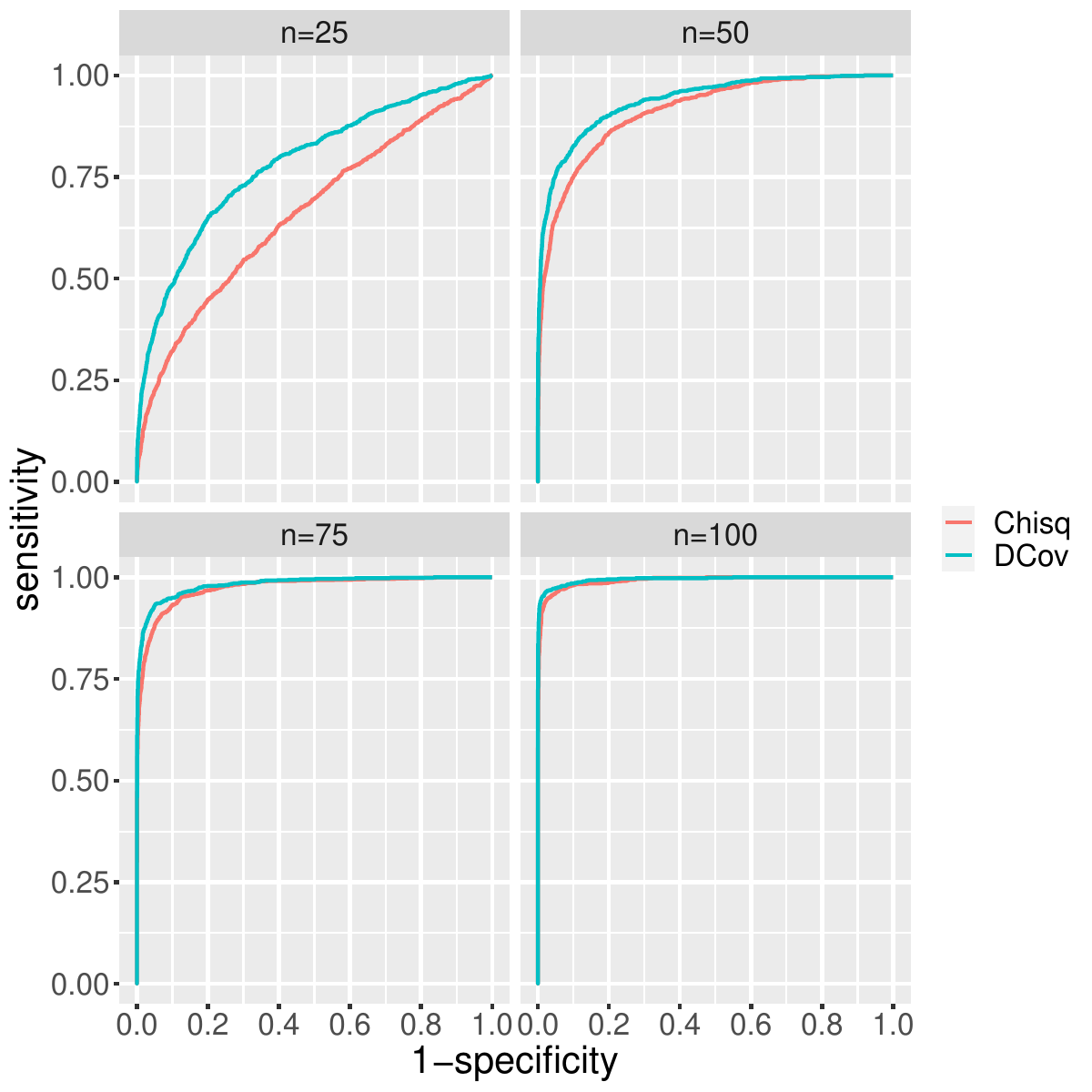}
\end{center}
\caption{ROC curves for distance covariance and Pearson's Chi-squared screenings under simulation setting 3 ($I_{k} = J = 8$ and $|S_{T}|/|S| = 10\%$).
}
\end{figure}

\newpage
\begin{figure}[!htbp]
\begin{center}
\includegraphics[scale=0.5]{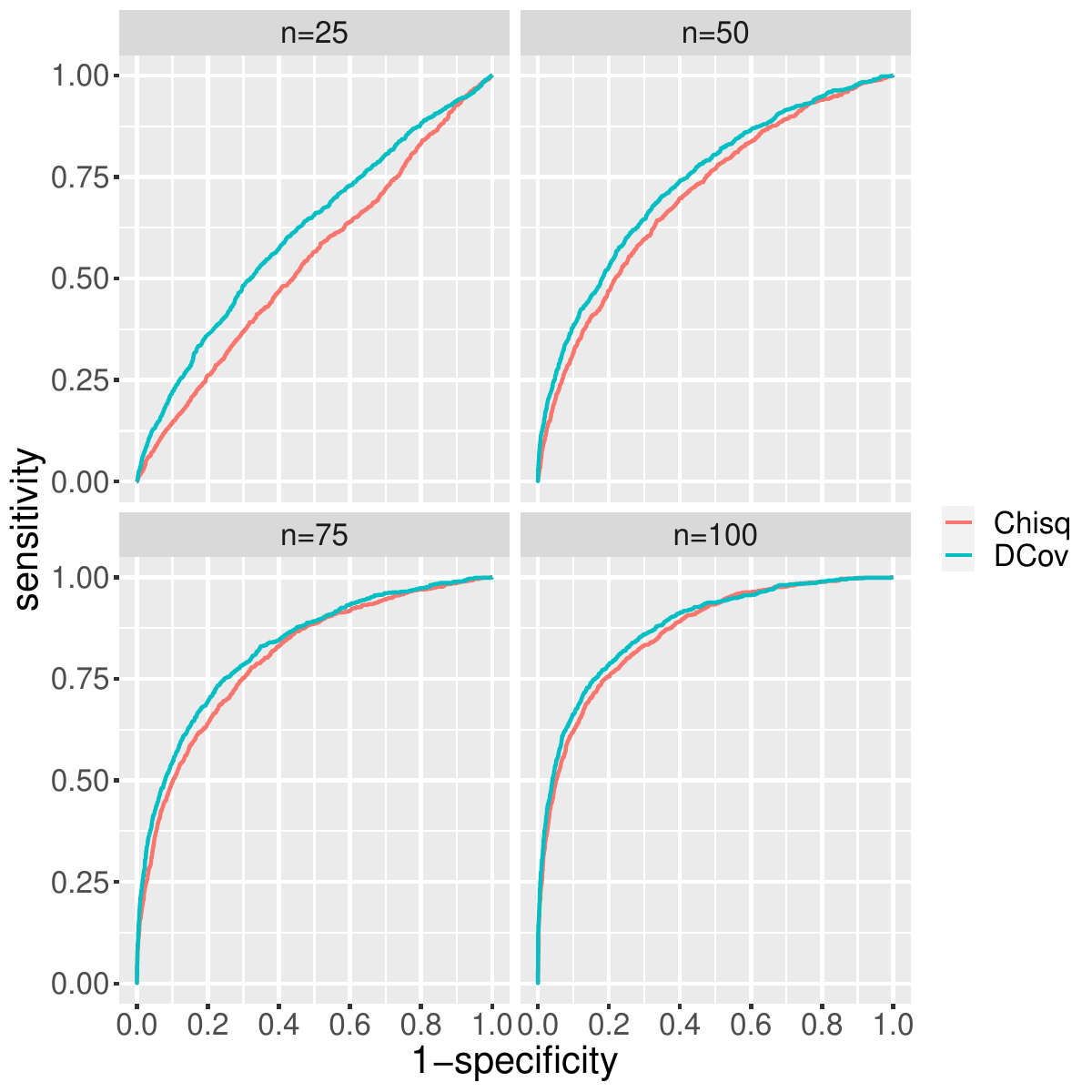}
\end{center}
\caption{ROC curves for distance covariance and Pearson's Chi-squared screenings under simulation setting 4 ($I_{k} = J = 10$ and $|S_{T}|/|S| = 10\%$).
}
\end{figure}

\newpage
\begin{figure}[!htbp]
\begin{center}
\includegraphics[scale=0.5]{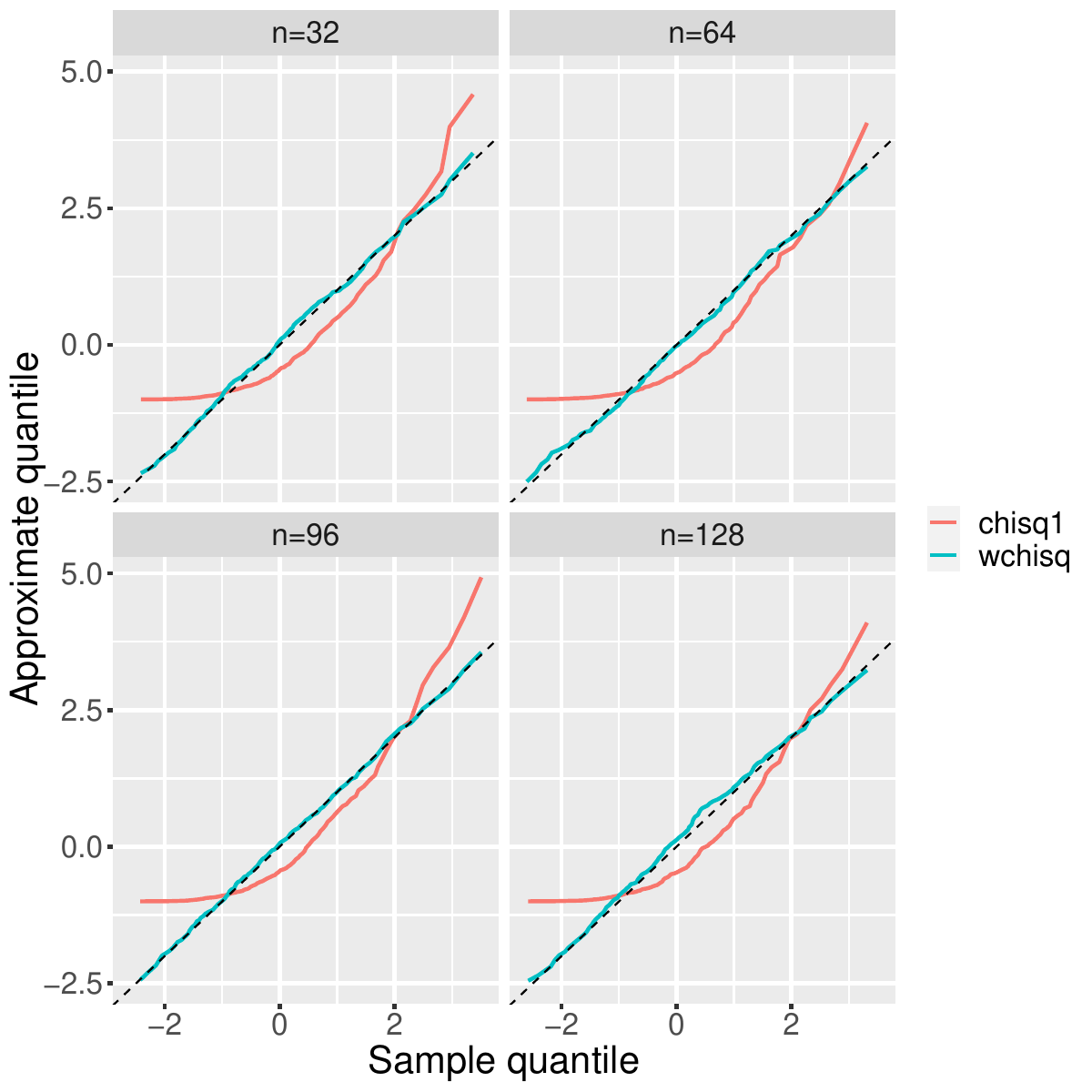}
\end{center}
\caption{QQ-plots comparing the weighted Chi-squared and Chi-squared ($df=1$) approximations to the null distribution for $I = J = 8$.
}
\end{figure}

\newpage
\begin{figure}[!htbp]
\begin{center}
\includegraphics[scale=0.5]{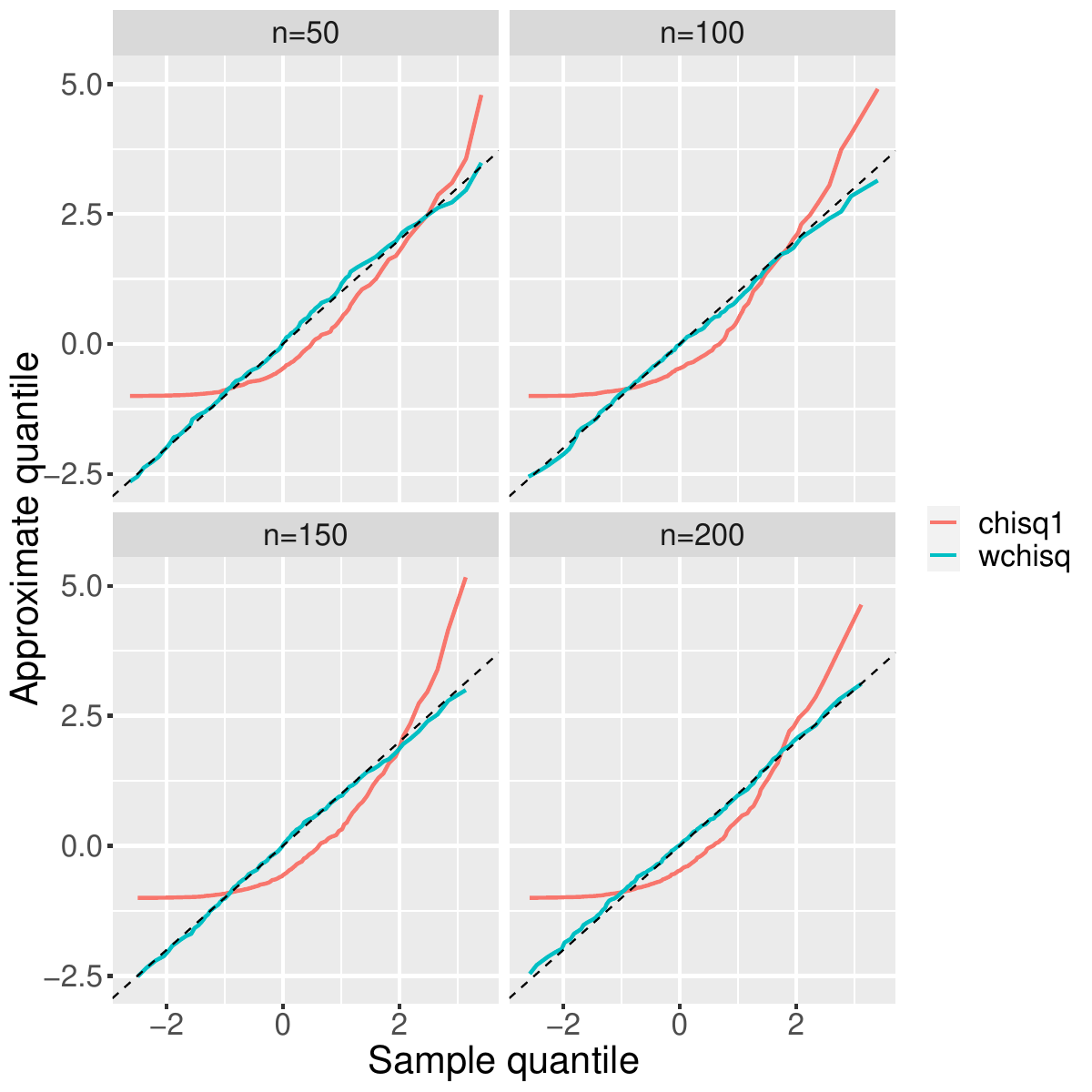}
\end{center}
\caption{QQ-plots comparing the weighted Chi-squared and Chi-squared ($df=1$) approximations to the null distribution for $I = J = 10$.
}
\end{figure}

\newpage
\begin{figure}[!htbp]
\begin{center}
\includegraphics[scale=0.5]{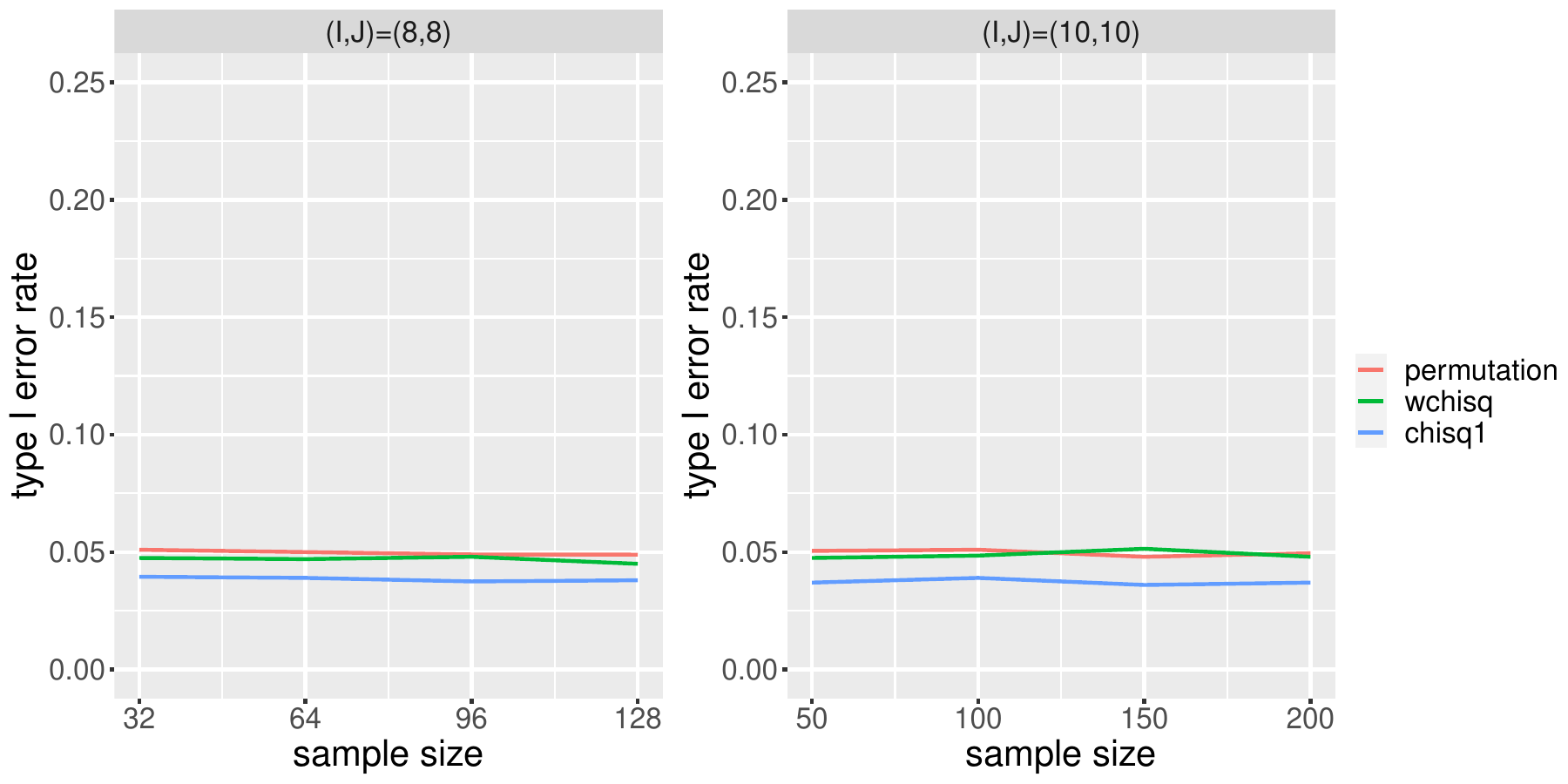}
\end{center}
\caption{Type I error rate of permutation test, weighted Chi-squared test, and Chi-squared test ($df=1$).
}
\end{figure}

\newpage
\begin{figure}[!htbp]
\begin{center}
\includegraphics[scale=0.5]{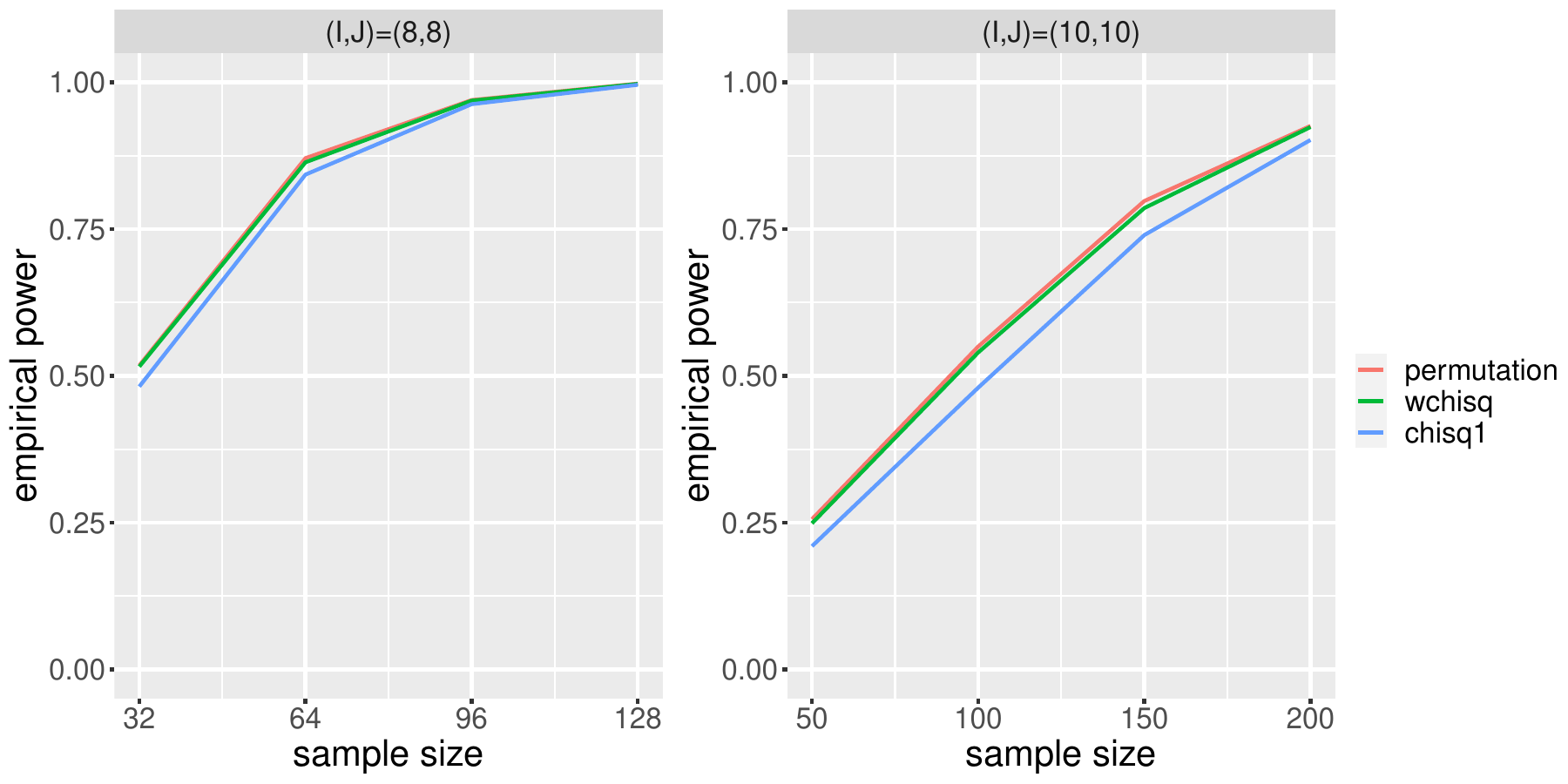}
\end{center}
\caption{Statistical power of permutation test, weighted Chi-squared test, and Chi-squared test ($df=1$).
}
\end{figure}

\end{document}